\DeclareMathOperator*{\argopt}{arg\;opt}
\DeclareMathOperator*{\argmax}{arg\;max}
\newtheorem{theorem}{Theorem}
\newtheorem{definition}{Definition}
\newtheorem{lemma}{Lemma}
\newtheorem{notation}{Notation}
\newtheorem{conjecture}{Conjecture}
\title{Performance Upper Bound of Grover-Mixer Quantum Alternating Operator Ansatz}
\author[1]{Ningyi Xie\thanks{E-mail: \href{mailto:nyxie@cavelab.cs.tsukuba.ac.jp}{nyxie@cavelab.cs.tsukuba.ac.jp}}}
\author[1]{Jiahua Xu}
\author[2]{Tiejin Chen}
\author[3]{Xinwei Lee\thanks{Currently affiliated with Singapore Management University}}
\author[4]{Yoshiyuki Saito}
\author[5]{Nobuyoshi Asai}
\author[6]{Dongsheng Cai}
\affil[1]{Graduate School of Science and Technology, University of Tsukuba}
\affil[2]{School of Computing and Augmented Intelligence, Arizona State University}
\affil[3]{Graduate School of Systems and Information Engineering, University of Tsukuba}
\affil[4]{Graduate School of Computer Science and Engineering, University of Aizu}
\affil[5]{School of Computer Science and Engineering, University of Aizu}
\affil[6]{Faculty of Engineering, Information and Systems, University of Tsukuba}
\date{}
\begin{document}

\maketitle

\begin{abstract}
    The Quantum Alternating Operator Ansatz (QAOA) represents a branch of quantum algorithms for solving combinatorial optimization problems. A specific variant, the Grover-Mixer Quantum Alternating Operator Ansatz (GM-QAOA), ensures uniform amplitude across states that share equivalent objective values. This property makes the algorithm independent of the problem structure, focusing instead on the distribution of objective values within the problem. In this work, we prove the probability upper bound for measuring a computational basis state from a GM-QAOA circuit with a given depth, which is a critical factor in QAOA cost. Using this, we derive the upper bounds for the probability of sampling an optimal solution, and for the approximation ratio of maximum optimization problems, both dependent on the objective value distribution. Through numerical analysis, we link the distribution to the problem size and build the regression models that relate the problem size, QAOA depth, and performance upper bound. Our results suggest that the GM-QAOA provides a quadratic enhancement in sampling probability and requires circuit depth that scales exponentially with problem size to maintain consistent performance.
\end{abstract}

\section{Introduction}
Combinatorial optimization problems are continuously studied by both industry and academia due to their broad applicability and inherent complexity. 
Since the number of possible solutions grows exponentially with the problem size, the computational cost to find the exact optimal solution skyrockets. This motivates the development of heuristic methods to find approximate solutions in a reasonable time \cite{pearl1984heuristics,papadimitriou1998combinatorial}. In recent years, this area of research has also been activated in the context of quantum computing \cite{hadfield2019quantum,rajak2023quantum,skolik2023equivariant,cerezo2021variational}, driven by the potential for speedups that mirror achievements in other domains \cite{grover1996fast,shor1999polynomial,montanaro2016quantum}. 

A popular family of quantum algorithms for addressing combinatorial optimization problems is the Quantum Alternating Operator Ansatz (QAOA) \cite{farhi2014quantum,hadfield2019quantum,wang2020x,Bartschi_2020,herrman2022multi, Vijendran_2024}. Inspired by the principles of adiabatic quantum computing \cite{farhi2000quantum}, QAOA algorithms prepare a parameterized quantum state through an alternating sequence of operations repeated for a pre-defined number of rounds $p$, known as the circuit depth. Many works have theoretically and numerically analyzed the effect of circuit depth on the quality of solutions obtained from QAOAs \cite{zhou2020quantum,mcclean2021low,Akshay_2022,shaydulin2023evidence}, including Grover-Mixer Quantum Alternating Operator Ansatz (GM-QAOA) \cite{Bartschi_2020,golden2023numerical,benchasattabuse2023lower}.

The GM-QAOA \cite{Bartschi_2020} initializes the quantum circuit in a uniform amplitude superposition of states encoding all possible solutions in the search space, and the search space is maintained by the Grover mixer. This allows GM-QAOA to inherently preserve the feasibility of constrained problems like the traveling salesman problem and the capacitated vehicle routing problem \cite{dantzig1959truck}, given efficient preparation of the feasible state superposition \cite{Bartschi_2020,xie2024feasibilitypreserved}. In addition, GM-QAOA assigns equal amplitudes to states corresponding to the same objective value, making the algorithm's performance solely determined by the distribution of objectives.  
Though GM-QAOA exhibits a better performance on small-scale problems \cite{golden2023quantum,xie2024feasibilitypreserved,zhang2024grover} compared to standard QAOA \cite{farhi2014quantum}, the performance scalability remains unclear. 
\citet{benchasattabuse2023lower} analyzes the least circuit depth $p$ required by GM-QAOA to achieve a targeted performance, establishing theoretical bounds by extending the theorem on quantum annealing time \cite{PhysRevLett.130.140601}. However, this depth scaling may not be tight, as numerical experiments in \cite{golden2023numerical} suggest conflicting results, showing the exponential growth of the required GM-QAOA depth as the problem size increases. 
Therefore, our work aims to provide a tight upper bound for GM-QAOA, which facilitates further analysis of the resource requirements of the GM-QAOA.
Recently, \citet{bridi2024analytical} also established the upper bounds for the GM-QAOA. Their method relies on a contradiction argument based on Grover’s search optimality \cite{hamann2021quantum}, while our approach employs a mathematical optimization approach using a relaxation function to find critical points through partial derivatives to determine the upper bound.

In this work, we derive upper bounds on the performance of GM-QAOA in terms of two key metrics: the probability of sampling the optimal solution and the approximation ratio. 
These bounds stem from our proof of the upper limit on the probability of measuring a computational basis state from the GM-QAOA circuit. 
We validate the performance bounds by comparing them to the empirical values sampled from the optimized GM-QAOA states. Unlike the previous bounds derived in \cite{benchasattabuse2023lower}, our bounds exhibit a consistent alignment with the simulation results of the GM-QAOA. 
Building upon the validated performance upper bounds, we explore the scalability of GM-QAOA. Through numerical analysis of the objective value distributions, we establish the relationship between problem size, QAOA depth, and performance upper bound for several widely studied combinatorial optimization problems, including traveling salesman problem, max-$k$-colorable-subgraph, max-cut, max-$k$-vertex-cover, with the problem definitions and instance sets detailed in Appendix~\ref{sec:problemset}. The results provide further evidence of the exponential resource requirements of GM-QAOA as the problem size increases.

The contributions of this study are summarized as follows:
    \begin{enumerate}[label=(\arabic*)]
    \item \textbf{Upper Bound Proofs:} For a fixed-depth GM-QAOA, we prove:
    \begin{enumerate}[label=\alph*)]
        \item the upper bound for the probability of measuring an arbitrary computational basis state,
        \item the upper bound for the probability of sampling the optimal solution, and
        \item the upper bound for the approximation ratio.
    \end{enumerate}
    \item \textbf{Scalability Evaluation:} Utilizing the proved bounds, we demonstrate the scalability issues of GM-QAOA, through numerical analysis of the objective value distributions for problems of varying sizes.
    \end{enumerate}





\section{Background of the GM-QAOA}
A combinatorial optimization problem is defined by $(C,F)$, where search space $F$ represents the finite set of possible solutions, and $C: F \rightarrow \mathbb{R}$ is the objective function that assigns a numerical value to each solution in $F$. The goal is to find an optimal solution that maximizes or minimizes the objective function, mathematically formulated as,
\begin{equation}
    \argopt_{f\in F} C(f).
\end{equation}

For a given problem $(C,F)$, 
the QAOA approaches yield solutions by performing measurements in the computational basis of parameterized quantum circuits, $\vert \psi(\boldsymbol{\theta})\rangle$. 
Here, the parameter $\boldsymbol{\theta}$ is typically tuned by optimizing the expected value $E(\boldsymbol{\theta})$, 
\begin{equation}
    E(\boldsymbol{\theta}) \coloneqq \langle \psi(\boldsymbol{\theta})\vert H_C \vert \psi(\boldsymbol{\theta})\rangle,
    \label{eqn:expected}
\end{equation}
where the problem Hamiltonian, $H_C$, satisfies $\langle f \vert H_C \vert f\rangle=C(f)$, $\forall f \in F$. 
Specifically, $\vert f \rangle$ represents the computational basis state encoding the solution $f$.

The GM-QAOA \cite{Bartschi_2020} prepares a parameterized state $\vert\psi_{p,\mathcal{C},F}(\boldsymbol{\gamma},\boldsymbol{\beta})\rangle$ from a uniform amplitude superposition, denoted as $\vert F \rangle$: 
\begin{equation}
    \vert F \rangle \coloneqq \frac{1}{\sqrt{\vert F \vert}}\sum_{f \in F}\vert f \rangle.
\end{equation}
Then, the state evolves through $p$ repetitions of two distinctive types of operation, $U^{(P)}_{\mathcal{C}}$ and $U^{(M)}_F$, 
which is given by: 
\begin{equation}
    \vert \psi_{p,\mathcal{C},F}(\boldsymbol{\gamma},\boldsymbol{\beta})\rangle \coloneqq U^{(M)}_F(\beta_p) U^{(P)}_{\mathcal{C}}(\gamma_p) \cdots U^{(M)}_F(\beta_2) U^{(P)}_{\mathcal{C}}(\gamma_2) U^{(M)}_F(\beta_1) U^{(P)}_{\mathcal{C}}(\gamma_1)\vert F \rangle,
\end{equation}
where $p$ denotes the pre-defined QAOA depth and $\boldsymbol{\gamma} = [\gamma_1, \gamma_2, \cdots,\gamma_p]^T $, $\boldsymbol{\beta} = [\beta_1, \beta_2, \cdots,\beta_p]^T$ are tunable circuit parameters.
The phase separation operation, $U^{(P)}_{\mathcal{C}}$, functions as,
\begin{equation}
    U^{(P)}_{\mathcal{C}}(\gamma) \vert f \rangle = e^{-i\gamma \mathcal{C}(f)} \vert f \rangle.
    \label{eqn:UP}
\end{equation}
Here, we denote $\mathcal{C}: F \rightarrow \mathbb{R}$ as the “phase function”.
Generally, it is pre-defined as $\mathcal{C}(f) \coloneqq C(f)$, and $U^{(P)}_{\mathcal{C}} \coloneqq e^{-i\gamma H_C}$.
\citet{golden2021threshold} introduce a threshold-based strategy as follows: 
\begin{equation}
    \mathcal{C}(f) = 
    \begin{cases}
        1 & C(f) \leqslant th \\
        0 & otherwise
    \end{cases}
\end{equation}
where threshold, $th$, is an additional tunable parameter. 
The mixing operation $U^{(M)}_F$ in the GM-QAOA is defined as: 
\begin{equation}
    \begin{split}
        U^{(M)}_F(\beta) & \coloneqq e^{-i\beta \vert F \rangle \langle F \vert} \\
                         & = I - (1-e^{-i\beta})\vert F \rangle \langle F \vert,
    \end{split}
    \label{eqn:UM}
\end{equation}
which has a Grover-like form \cite{grover1996fast}.

We adopt two metrics to evaluate the effectiveness of states prepared by the GM-QAOA:
\begin{definition} [Probability of sampling the optimal solution]
    Given a problem $(C,F)$, 
    let $F^\ast$ denote the set of optimal solutions, where 
    $F^\ast \coloneqq \{f^\ast|f^\ast=\argopt_{f\in F} C(f)\}$.
    Let $\vert \psi(\boldsymbol{\theta})\rangle$ be the prepared quantum state (i.e., $\vert \psi_{p,\mathcal{C},F}(\boldsymbol{\gamma},\boldsymbol{\beta})\rangle$ in the GM-QAOA), 
    then the probability of sampling the optimal solution denoted $\lambda$,
    is defined as: 
    \begin{equation}
        \begin{split}
            \lambda \coloneqq \sum_{f^\ast \in F^\ast}\vert \langle f^\ast \vert \psi(\boldsymbol{\theta})\rangle \vert^2;
        \end{split}
    \end{equation}
\end{definition}
\begin{definition} [Approximation ratio]
    \label{def:approx}
    Given a maximum optimization problem $(C,F)$, 
    and let $H_C$ denote the problem Hamiltonian. 
    the approximation ratio of a prepared quantum state $\vert \psi(\boldsymbol{\theta})\rangle$ (i.e., $\vert \psi_{p,\mathcal{C},F}(\boldsymbol{\gamma},\boldsymbol{\beta})\rangle$ in the GM-QAOA), 
    denoted $\alpha$, is defined as: 
    \begin{equation}
        \alpha \coloneqq \frac{\langle \psi(\boldsymbol{\theta})\vert H_C \vert \psi(\boldsymbol{\theta})\rangle}{\max_{f\in F}C(f)}.
        \label{eqn:approx}
    \end{equation}
\end{definition}

The probability of sampling an optimal solution, $\lambda$, directly influences the QAOA time-to-solution (TTS), which is defined as $\frac{1}{\lambda}$ \cite{shaydulin2023evidence}. This represents the expected number of measurements required to sample an optimal solution from the QAOA state.  Additionally, the approximation ratio, $\alpha$, is widely adopted to evaluate approximation algorithms. It provides a measure of how close the solution given by the QAOA is to the optimal solution. This work establishes the upper bounds for both metrics of the constant depth GM-QAOA circuits.

\section{Results}
\subsection{Performance Upper Bound}
In this section, we first introduce theoretical upper limits on the probability of measuring a computational basis state from GM-QAOA circuits. 
\begin{theorem}
    \label{the:1}
    Given a problem defined with the search space $F$, using any phase function $\mathcal{C}: F \rightarrow \mathbb{R}$,
    the probability of sampling an arbitrary computational basis state $\vert f \rangle$ from a depth-$p$ GM-QAOA circuit has an upper bound: 
    \begin{equation}
        \vert \langle f \vert \psi_{p,\mathcal{C},F}(\boldsymbol{\gamma},\boldsymbol{\beta})\rangle \vert^2 < \frac{(2p+1)^2}{\vert F \vert}.
    \end{equation}
\end{theorem}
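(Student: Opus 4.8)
The plan is to exploit the Grover-like structure of the mixer $U^{(M)}_F(\beta) = I - (1-e^{-i\beta})\vert F\rangle\langle F\vert$, which acts nontrivially only on the two-dimensional subspace spanned by $\vert F\rangle$ and the target state $\vert f\rangle$ (or more precisely, by $\vert F\rangle$ and the component of $\vert f\rangle$ orthogonal to $\vert F\rangle$). First I would track how the amplitude $\langle f\vert\psi_{k,\mathcal{C},F}\rangle$ evolves layer by layer. Since $U^{(P)}_{\mathcal{C}}(\gamma)$ is diagonal in the computational basis, it only multiplies $\langle f\vert\psi\rangle$ by a phase $e^{-i\gamma\mathcal{C}(f)}$ and otherwise reshuffles phases inside the $\vert F\rangle$-overlap; the mixer then feeds a correction proportional to $\langle F\vert\psi\rangle$ back into every basis amplitude. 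Writing $a_k \coloneqq \langle f\vert\psi_{k}\rangle$ and $b_k \coloneqq \langle F\vert\psi_{k}\rangle$, one obtains a recursion in which $a_k$ picks up at most one new term of magnitude controlled by $\vert 1-e^{-i\beta_k}\vert\,\vert b_k\vert \le 2\vert b_k\vert$, and $\vert b_k\vert\le 1$ since $\vert F\rangle$ is a unit vector and the evolution is unitary.

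The key step is the bookkeeping: I would argue by induction on $k$ that $\vert a_k\vert$ is a sum of at most $2k+1$ contributions (one from the initial overlap $\langle f\vert F\rangle = 1/\sqrt{\vert F\vert}$, and two per layer — one from the phase-separation step acting on the accumulated $\vert F\rangle$-component and one from the mixer's feedback), each of magnitude at most $1/\sqrt{\vert F\vert}$. The initial overlap gives $\vert a_0\vert = 1/\sqrt{\vert F\vert}$. After $p$ layers, the triangle inequality yields $\vert a_p\vert \le (2p+1)/\sqrt{\vert F\vert}$, and squaring gives the claimed bound. The strict inequality should come from the fact that the triangle-inequality bound is attained only if all contributing phases align, which cannot happen simultaneously with unitarity forcing $\vert b_k\vert = 1$ at every step together with $\vert a_k\vert$ also growing — i.e., the extremal configuration is not simultaneously realizable, so the inequality is strict for all finite $\vert F\vert$.

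The main obstacle I anticipate is making the "at most $2k+1$ terms of size $\le 1/\sqrt{\vert F\vert}$" claim precise. The subtlety is that the $\vert F\rangle$-component $b_k$ is itself a sum of amplitudes over \emph{all} of $F$, so the feedback term $(1-e^{-i\beta_k})\langle F\vert\psi_{k}\rangle\langle f\vert F\rangle$ has magnitude bounded by $2\cdot 1\cdot(1/\sqrt{\vert F\vert})$ — this is where the clean $1/\sqrt{\vert F\vert}$ per-term bound comes from, and it relies crucially on $\vert\langle f\vert F\rangle\vert = 1/\sqrt{\vert F\vert}$ rather than on any bound on $b_k$ beyond $\vert b_k\vert\le 1$. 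I would therefore need to carefully separate the amplitude $a_k$ into the "free" phase-rotated part and the "injected" part at each layer, show the injected part always carries the factor $\langle f\vert F\rangle$, and confirm the count. An alternative, possibly cleaner, route is to unroll the product of operators and expand $\psi_{p} = \prod_{k}\big(I-(1-e^{-i\beta_k})\vert F\rangle\langle F\vert\big)U^{(P)}_{\mathcal{C}}(\gamma_k)\vert F\rangle$ directly: each factor contributes either the identity or a rank-one term $\vert F\rangle\langle F\vert$, and any expansion with two or more identities in a row between phase operators collapses because $U^{(P)}$ between two $\vert F\rangle\langle F\vert$'s contributes only a scalar — so the number of surviving "words" that reach $\vert f\rangle$ with a genuinely new phase is linear, not exponential, in $p$. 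I would pick whichever of these two accountings makes the constant $2p+1$ fall out most transparently, and then handle strictness as a separate short argument.
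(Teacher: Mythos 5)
Your proposal is correct, and your first accounting (the layer-by-layer amplitude recursion) gives a complete proof by a genuinely different and far more elementary route than the paper's. Writing $a_k\coloneqq\langle f\vert\psi_k\rangle$ for the state after $k$ layers and using the rank-one form of the mixer, one layer gives
\begin{equation*}
a_k=e^{-i\gamma_k\mathcal{C}(f)}a_{k-1}-(1-e^{-i\beta_k})\,\bigl\langle F\big\vert U^{(P)}_{\mathcal{C}}(\gamma_k)\big\vert\psi_{k-1}\bigr\rangle\,\langle f\vert F\rangle,
\end{equation*}
so $\vert a_k\vert\leqslant\vert a_{k-1}\vert+2/\sqrt{\vert F\vert}$ by $\vert 1-e^{-i\beta_k}\vert\leqslant 2$ and Cauchy--Schwarz on two unit vectors, whence $\vert a_p\vert\leqslant(2p+1)/\sqrt{\vert F\vert}$. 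This is the Grover-optimality style of hybrid argument, closer in spirit to the approach of Bridi et al.\ that the paper cites than to the paper's own proof, which instead unrolls the full product into a sum over $2^p$ words, relaxes the phase values $\mathcal{C}(f')$ to free real variables, and locates the maximum of the relaxation via critical points, a Chebyshev-type recursion, and Lagrange multipliers. What the paper's longer route buys is a characterization of the near-extremal parameter configurations ($\beta_i=\pi$, phases in $\{0,\pi\}$, i.e.\ Grover-like settings), which your argument does not supply; what yours buys is brevity and transparency. Three repairs to your sketch. First, the ``two per layer'' does not come from the phase-separation step (which only rotates $a_{k-1}$ by a unimodular phase and adds nothing to $\vert a_k\vert$); it comes entirely from $\vert 1-e^{-i\beta_k}\vert\leqslant 2$ in the single injected term, as your own third paragraph correctly identifies. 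Second, your alternative unrolled-word accounting does \emph{not} transparently give $2p+1$: bounding each of the $\binom{p}{k}$ words with $k$ rank-one insertions by $2^k/\sqrt{\vert F\vert}$ and summing yields only $3^p/\sqrt{\vert F\vert}$, so the recursion is the accounting to keep. Third, strictness should be made concrete: equality in the final bound forces equality at the last step, hence $\bigl\vert\langle F\vert U^{(P)}_{\mathcal{C}}(\gamma_p)\vert\psi_{p-1}\rangle\bigr\vert=1$, hence $U^{(P)}_{\mathcal{C}}(\gamma_p)\vert\psi_{p-1}\rangle\propto\vert F\rangle$, which pins $\vert a_p\vert$ back to $1/\sqrt{\vert F\vert}$ --- a contradiction for $p\geqslant 1$ (for $p=0$ the bound is attained with equality, as in the paper).
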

\begin{proof}
    See Appendix~\ref{sec:profThe1} for details and the proof sketch is as follows. The proof begins by expanding the probability $\vert \langle f \vert \psi_{p,\mathcal{C},F}(\boldsymbol{\gamma},\boldsymbol{\beta})\rangle \vert^2$ and introducing a relaxation function $G$ by including new variables, allowing the same or greater range. The search for the maximum is then narrowed to the set of points where the partial derivatives of $G$ are zero. This set is further reduced by identifying subsets yielding identical $G$ outputs. Finally, the maximum value of $G$ is found within this restricted search space, thereby establishing the upper bound of the probability $\vert \langle f \vert \psi_{p,\mathcal{C},F}(\boldsymbol{\gamma},\boldsymbol{\beta})\rangle \vert^2$. 
\end{proof}

Building on this fundamental result, we further derive upper bounds for the probability of sampling the optimal solution and the approximation ratio achieved by GM-QAOA, each based on a statistical metric that assesses the distribution of objective function values, defined as follows.

\begin{definition}[Optimality density]
    \label{def:od}
    Given a problem $(C,F)$, 
    and let $F^\ast$ denote the set of optimal solutions, where 
    $F^\ast \coloneqq \{f^\ast|f^\ast=\argopt_{f\in F} C(f)\}$.
    Then, the optimality density, denoted $\rho $,
    is defined as, 
    \begin{equation}
        \begin{split}
            \rho \coloneqq \frac{\vert F^\ast \vert}{\vert F \vert}.
        \end{split}
    \end{equation}
\end{definition}
The optimality density $\rho$ represents the proportion of optimal solutions among all solutions in the search space. Using $\rho$, from Theorem~\ref{the:1}, we can straightforwardly determine the upper bound for the probability of sampling the optimal solution. This leads us to the following theorems:
\begin{theorem}
    \label{the:2}
    Given a problem $(C,F)$, 
    where the optimality density of the distribution of objective values is $\rho$.
    Then, the probability of sampling the optimal solution from a depth-$p$ GM-QAOA circuit is bounded as,
    \begin{equation}
        \lambda < (2p+1)^2\rho.
    \end{equation}
\end{theorem}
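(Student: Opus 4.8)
The plan is to obtain Theorem~\ref{the:2} as an immediate corollary of Theorem~\ref{the:1} by summing the per-state bound over the set of optimal solutions $F^\ast$. Concretely, I would begin from the definition $\lambda = \sum_{f^\ast \in F^\ast} \vert\langle f^\ast \vert \psi_{p,\mathcal{C},F}(\boldsymbol{\gamma},\boldsymbol{\beta})\rangle\vert^2$ and apply Theorem~\ref{the:1} to each summand separately. Since the phase function $\mathcal{C}$ in Theorem~\ref{the:1} is arbitrary, the bound $\vert\langle f^\ast \vert \psi_{p,\mathcal{C},F}(\boldsymbol{\gamma},\boldsymbol{\beta})\rangle\vert^2 < (2p+1)^2/\vert F\vert$ holds for every $f^\ast$, irrespective of whether the GM-QAOA uses $\mathcal{C}=C$ or a threshold phase function.

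Summing over the $\vert F^\ast\vert$ optimal solutions then gives $\lambda < \vert F^\ast\vert (2p+1)^2/\vert F\vert$. At this step I would note that a finite sum of strict inequalities remains strict provided the index set is nonempty, and $F^\ast \neq \emptyset$ always holds for a finite search space, so strictness is preserved. Substituting the definition $\rho = \vert F^\ast\vert/\vert F\vert$ from Definition~\ref{def:od} yields $\lambda < (2p+1)^2\rho$, which is the claim.

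The argument is essentially mechanical once Theorem~\ref{the:1} is in hand, so there is no genuine obstacle; the only point requiring a word of care is that the strict inequality survives both the summation and the substitution. As an aside, one could instead invoke the structural property of GM-QAOA that all computational basis states with a common objective value receive equal amplitude — hence $\lambda = \vert F^\ast\vert \cdot \vert\langle f^\ast\vert\psi_{p,\mathcal{C},F}(\boldsymbol{\gamma},\boldsymbol{\beta})\rangle\vert^2$ for any fixed $f^\ast \in F^\ast$ — and then apply Theorem~\ref{the:1} once, but this sharpening is unnecessary because the term-by-term bound already delivers the stated result.
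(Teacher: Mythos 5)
Your proposal is correct and is essentially the paper's own argument: the paper likewise applies the per-state bound of Theorem~\ref{the:1} to each optimal solution, sums over $F^\ast$, and substitutes $\rho = \vert F^\ast\vert/\vert F\vert$. Your added remarks on preservation of strictness and the equal-amplitude property are fine but not needed.
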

\begin{proof}
    By applying the probability upper bound from Theorem~\ref{the:1} to each optimal solution and summing over, then, we get the upper bound for the probability of sampling the optimal solution.
\end{proof}

Theorem \ref{the:2} reveals a significant limitation of the GM-QAOA for sampling optimal solution. In typical combinatorial optimization problems, the search space grows exponentially with problem size, while the number of optimal solutions remains relatively small or grows at a much slower rate. Consequently, the optimality density $\rho$ decreases exponentially with increasing problem size. Despite the quadratic enhancement factor $(2p+1)^2$ provided by GM-QAOA, this is insufficient to counteract the exponential decrease in  $\rho$. We further numerically demonstrate this in Section \ref{sec:lambda}.

\begin{definition}[Top-$r$-proportion mean-max ratio]
    \label{def:mmr}
    Given a problem $(C,F)$, 
    sort the solutions as $f^{(1)}, f^{(2)},\cdots, f^{(\vert F \vert)}$, 
    based on their objective function values,
    such that 
    \begin{equation}
        C(f^{(1)})\geqslant C(f^{(2)})\geqslant \cdots \geqslant C(f^{(\vert F \vert)}).
        \label{eqn:sort}
    \end{equation}
    The top-$r$-proportion mean-max ratio, $\mu_r$, 
    measures the mean value of top $r$ proportion objective values over the maximum value, defined as: 
    \begin{equation}
        \mu_r \coloneqq \frac{\sum_{i=1}^{\left\lceil r\vert F \vert \right\rceil } C(f^{(i)})}{r\vert F \vert C(f^{(1)})}.
    \end{equation}
\end{definition}
The top-$r$-proportion mean-max ratio $\mu_r$ suggests an approximation ratio where probabilities are equally assigned to the top-$r$-proportion solutions and remain zero for others. Consider Theorem~\ref{the:1}, which indicates the maximum probability that can be assigned by GM-QAOA, we derive the upper bound on the approximation ratio that the GM-QAOA can achieve, as stated in the following theorem:
\begin{theorem}
    \label{the:3}
    Given a problem $(C, F)$, where the top-$r$-proportion mean-max ratio is considered as $\mu_r$.
    Then, a depth-$p$ GM-QAOA circuit can achieve an approximation ratio, $\alpha$, with an upper bound given by,
    \begin{equation}
        \alpha \leqslant \mu_{\frac{1}{(2p+1)^2}}.
    \end{equation}
\end{theorem}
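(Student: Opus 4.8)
The plan is to reduce the bound on $\alpha$ to a linear estimate over measurement-probability vectors and then evaluate it with a single algebraic identity. With the sorting of Definition~\ref{def:mmr}, write $C_i \coloneqq C(f^{(i)})$ and $p_i \coloneqq \vert\langle f^{(i)}\vert\psi_{p,\mathcal C,F}(\boldsymbol\gamma,\boldsymbol\beta)\rangle\vert^2$, so $C_1 = \max_{f\in F}C(f)$; since $H_C$ is diagonal in the computational basis with $\langle f\vert H_C\vert f\rangle = C(f)$ and the GM-QAOA state stays in the linear span of $\{\vert f\rangle : f\in F\}$, Definition~\ref{def:approx} gives
\[
  \alpha \;=\; \frac{1}{C_1}\sum_{i=1}^{\vert F\vert} p_i C_i ,
\]
where $(p_i)$ is a probability distribution, $\sum_i p_i = 1$, and --- crucially --- Theorem~\ref{the:1} caps every coordinate, $p_i < q \coloneqq (2p+1)^2/\vert F\vert$. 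So $\alpha$ is at most $C_1^{-1}$ times the value of the program $\max\{\sum_i p_iC_i : 0\le p_i\le q,\ \sum_i p_i = 1\}$.

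The ``water-filling'' picture --- saturate the cap $q$ on the highest-value solutions until the unit budget is spent, which touches exactly $m \coloneqq \lceil 1/q\rceil = \lceil r\vert F\vert\rceil$ solutions with $r = 1/(2p+1)^2$ --- fixes both the target value $q\sum_{i=1}^m C_i$ and the index $m$. Rather than proving the program's optimum is attained there, I would bound $C_1\alpha$ directly, for every feasible $(p_i)$, using the exact identity (valid because $mq\ge 1$)
\[
  C_1\alpha - q\sum_{i=1}^{m} C_i \;=\; (1-mq)\,C_m \;+\; \sum_{i=1}^{m}(p_i-q)(C_i-C_m) \;+\; \sum_{i>m} p_i\,(C_i-C_m),
\]
obtained by expanding and substituting $\sum_i p_i = 1$. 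Each right-hand term is nonpositive: $1-mq\le 0$ and $C_m\ge 0$; $p_i-q<0$ by Theorem~\ref{the:1} while $C_i-C_m\ge 0$ for $i\le m$; and $C_i-C_m\le 0$ for $i>m$. Hence $C_1\alpha\le q\sum_{i=1}^m C_i$, and substituting $q = 1/(r\vert F\vert)$ yields $\alpha\le \frac{\sum_{i=1}^{\lceil r\vert F\vert\rceil}C(f^{(i)})}{r\vert F\vert\,C(f^{(1)})} = \mu_{1/(2p+1)^2}$, which is the claim.

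The main obstacle is the rounding mismatch inside $\mu_r$: its numerator carries the ceiling $\lceil r\vert F\vert\rceil$ while its denominator carries the un-rounded $r\vert F\vert$, so the extremal distribution puts mass $q$ on only $\lceil r\vert F\vert\rceil - 1$ of its support points and partially fills the last --- this is precisely the origin of the $(1-mq)C_m$ term, which is nonpositive only when $C(f^{(\lceil r\vert F\vert\rceil)})\ge 0$. That is automatic for the maximization problems of Appendix~\ref{sec:problemset} (objectives counting edges or cut sizes), and I would record it, together with $\max_{f\in F}C(f)>0$ needed to divide through in Definition~\ref{def:approx}, as a standing assumption. The remaining point --- normalization of the GM-QAOA state within the span of $\{\vert f\rangle : f\in F\}$ --- is routine, as is the alternative route through the explicit linear-program optimum plus a rearrangement argument.
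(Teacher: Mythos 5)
Your proof is correct and takes essentially the same route as the paper: both arguments compare $\alpha$ against $\mu_{1/(2p+1)^2}$ by capping every measurement probability at $(2p+1)^2/\vert F\vert$ via Theorem~\ref{the:1} and splitting the sum at the top $\lceil \vert F\vert/(2p+1)^2\rceil$ solutions with a pivot objective value. The only differences are cosmetic --- you pivot on $C(f^{(m)})$ via an exact identity where the paper pivots on $C(f^{(m+1)})$ via a chain of inequalities --- and your explicit flagging of the nonnegativity assumptions ($C(f^{(m)})\geqslant 0$, $\max_f C(f)>0$) is a point the paper leaves implicit.
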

\begin{proof}
    See Appendix~\ref{sec:profThe3}.
\end{proof}

In contrast to the optimality density, characterizing the top-$r$-proportion mean-max ratio $\mu_r$ for the distribution is not straightforward. Through regression analysis on $\mu_r$, we establish the relationship between problem size, circuit depth, and approximation ratio $\alpha$, thus evaluating the scalability for the GM-QAOA from the aspect of $\alpha$, as discussed in Section~\ref{sec:alpha}.


\begin{figure}[t]
    \centering
    \begin{subfigure}[b]{0.32\textwidth}
        \centering
        \includegraphics[width=\textwidth]{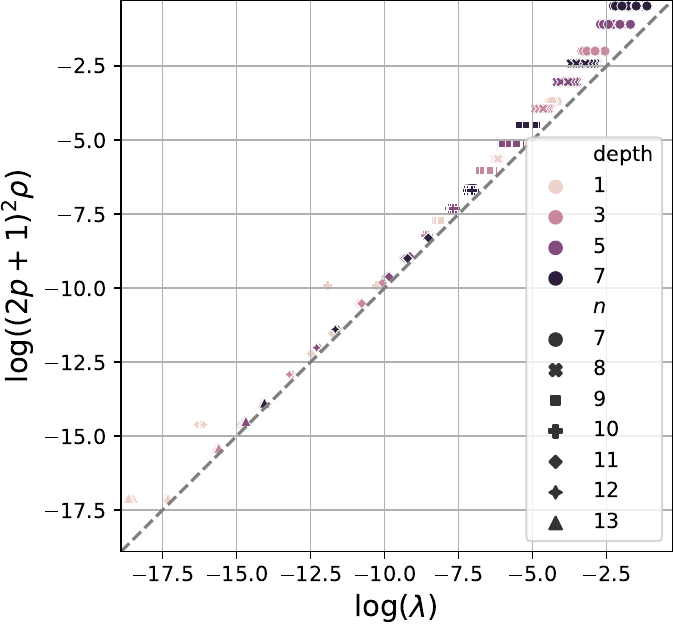} 
        \caption{travelling salesman problem}
        \label{fig:popt_sub1}
    \end{subfigure}
    \begin{subfigure}[b]{0.32\textwidth}
        \centering
        \includegraphics[width=\textwidth]{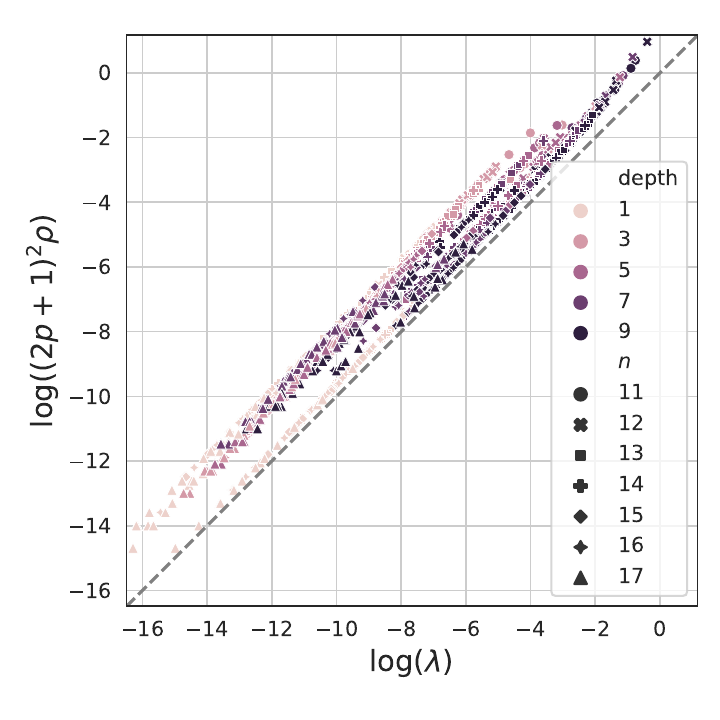}
        \caption{max-$k$-colorable-subgraph}
        \label{fig:popt_sub2}
    \end{subfigure}
    \begin{subfigure}[b]{0.32\textwidth}
        \centering
        \includegraphics[width=\textwidth]{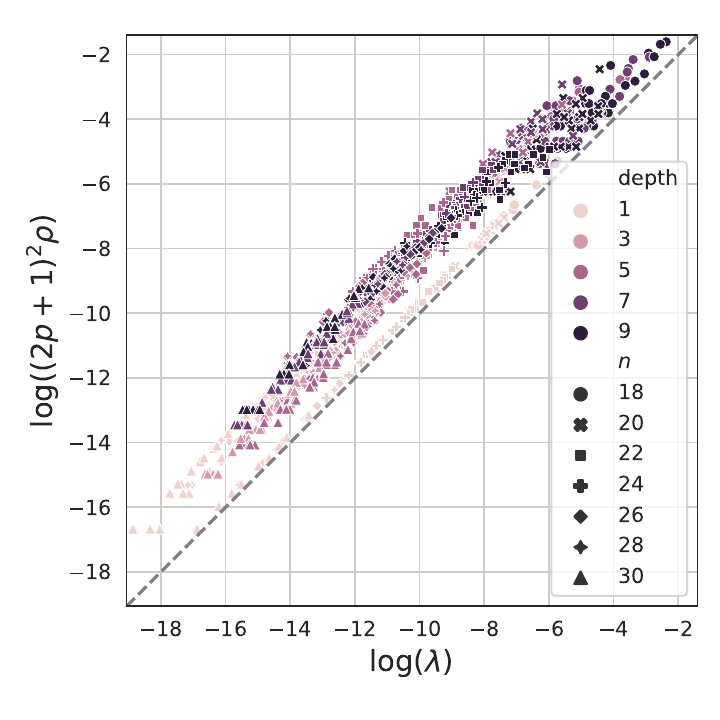}
        \caption{max-$k$-vertex-cover}
        \label{fig:popt_sub3}
    \end{subfigure}
    \caption{Comparison of the probability of sampling the optimal solution from the GM-QAOA circuit against the upper bound derived in Theorem~\ref{the:2}, where $\rho$ denotes the optimality density of the objective value distribution. The dashed gray line represents the line of equality. Here, the circuit parameters are tuned by maximizing $\lambda$, directly. All data points are positioned in the upper left and near the line of equality, illustrating the accuracy and tightness of the derived bounds.} 
    \label{fig:popt}
\end{figure}

\subsection{Optimal Solution Sampling Probability Scaling}
\label{sec:lambda}
In this section, we first validate the derived upper bound in Theorem~\ref{the:2}, for the probability of measuring the optimal solution $\lambda$, by comparing it with the $\lambda$ sampled from optimized GM-QAOA at depths ranging from $1$ to $9$. The experiments are conducted across three specific problems, where the search space can be limited to feasible sets when using GM-QAOA: traveling salesman problem, max-$k$-colorable-subgraph, and max-$k$-vertex-cover. The $k$ values of the max-$k$-colorable-subgraph and max-$k$-vertex-cover problems are set to $3$ and $\frac{n}{2}$, respectively. Here, $n$ is the graph size. The results depicted in Figure~1 show that all sampled $\lambda$ values are bounded by the theoretical upper bound for all three problems, confirming our Theorem~\ref{the:2}. Furthermore, the data points are generally close to the line of equality, demonstrating the tightness. 

\begin{figure}[t]
    \centering
    \begin{subfigure}[b]{0.36\textwidth}
        \centering
        \includegraphics[width=\textwidth]{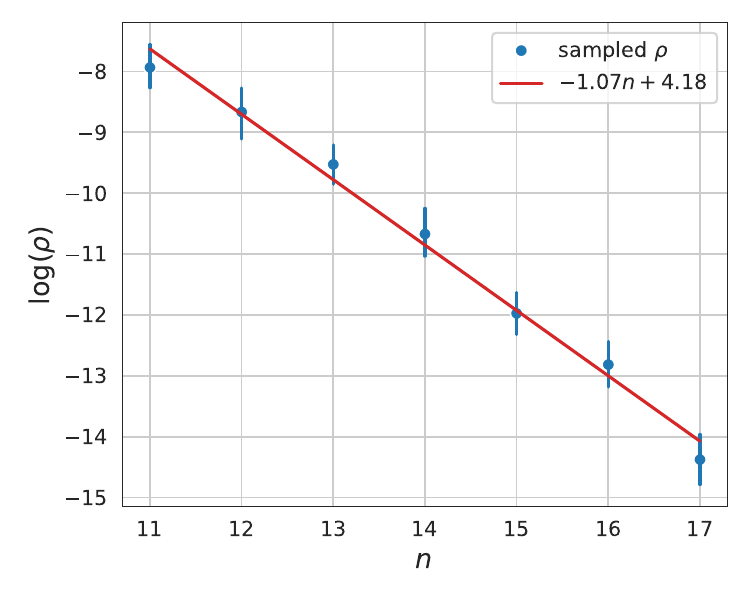} 
        \caption{max-$k$-colorable-subgraph}
        \label{fig:opt_density_sub1}
    \end{subfigure}
    \begin{subfigure}[b]{0.36\textwidth}
        \centering
        \includegraphics[width=\textwidth]{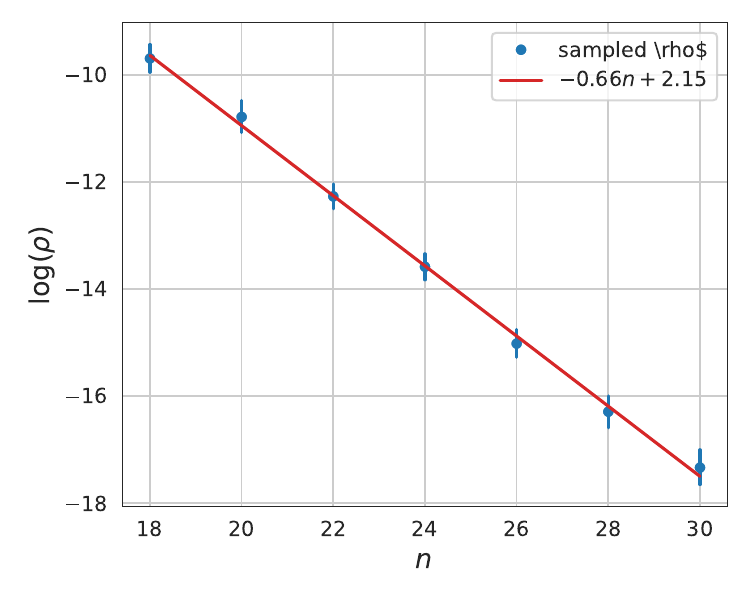}
        \caption{max-$k$-vertex-cover}
        \label{fig:opt_density_sub2}
    \end{subfigure}
    \caption{Optimality density $\rho$ sampling from max-$k$-colorable problems (\subref{fig:opt_density_sub1}) and max-$k$-vertex problems (\subref{fig:opt_density_sub2}), across various problem sizes (i.e., number of graph vertices, $n$), with $48$ instances for each size. Blue points represent the sampled data, with error bars indicating the $0.95$ confidence interval for the mean. Red lines show the linear regression results, illustrating an exponential decrease in optimality density.}
    \label{fig:opt_density}
\end{figure}

\begin{figure}[t]
    \centering
    \begin{subfigure}[b]{0.32\textwidth}
        \centering
        \includegraphics[width=\textwidth]{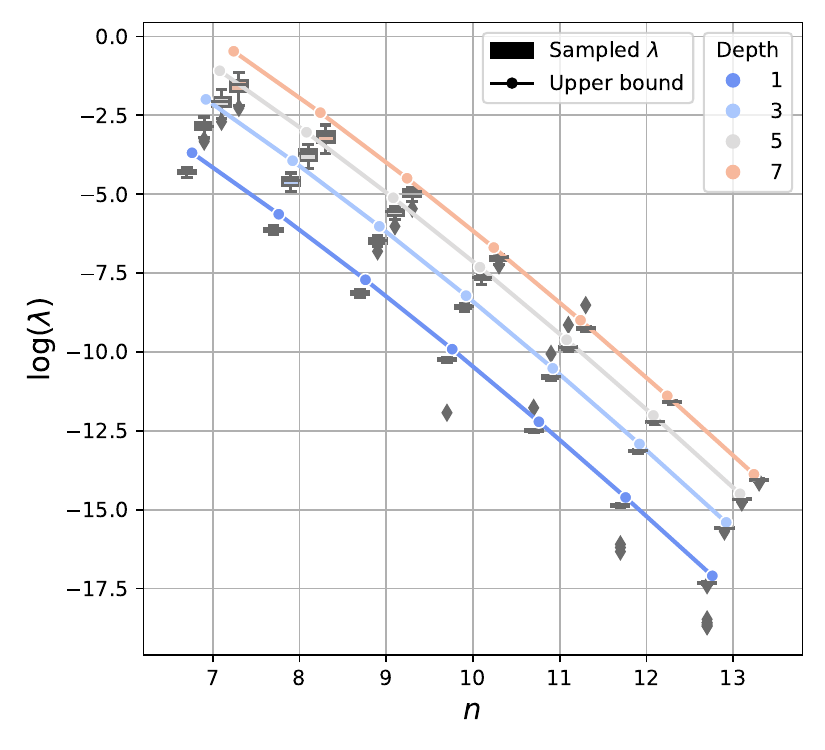} 
        \caption{travelling salesman problem}
        \label{fig:popt_fit_sub1}
    \end{subfigure}
    \begin{subfigure}[b]{0.32\textwidth}
        \centering
        \includegraphics[width=\textwidth]{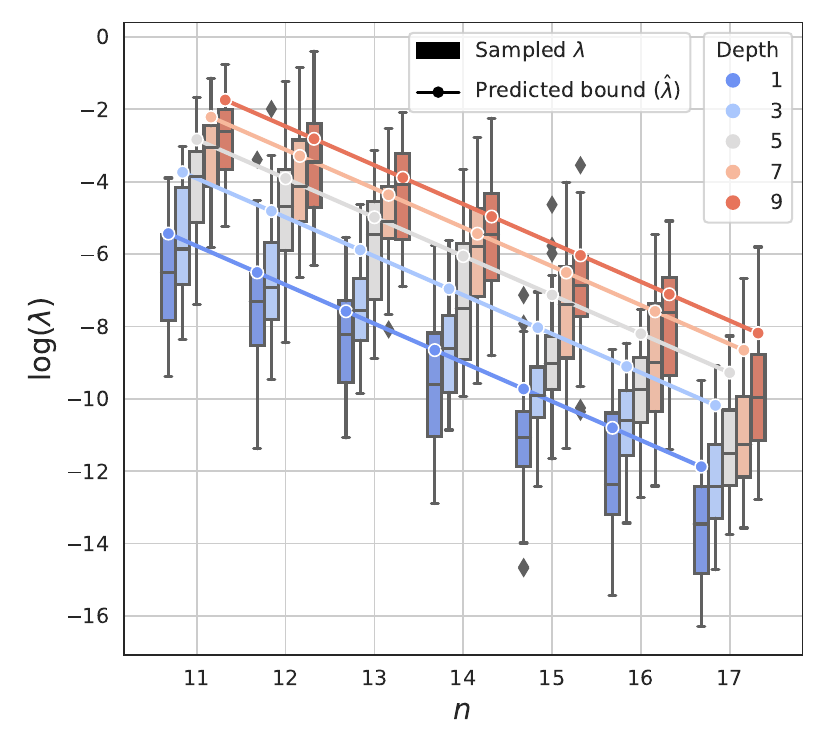}
        \caption{max-$k$-colorable-subgraph}
        \label{fig:popt_fit_sub2}
    \end{subfigure}
    \begin{subfigure}[b]{0.32\textwidth}
        \centering
        \includegraphics[width=\textwidth]{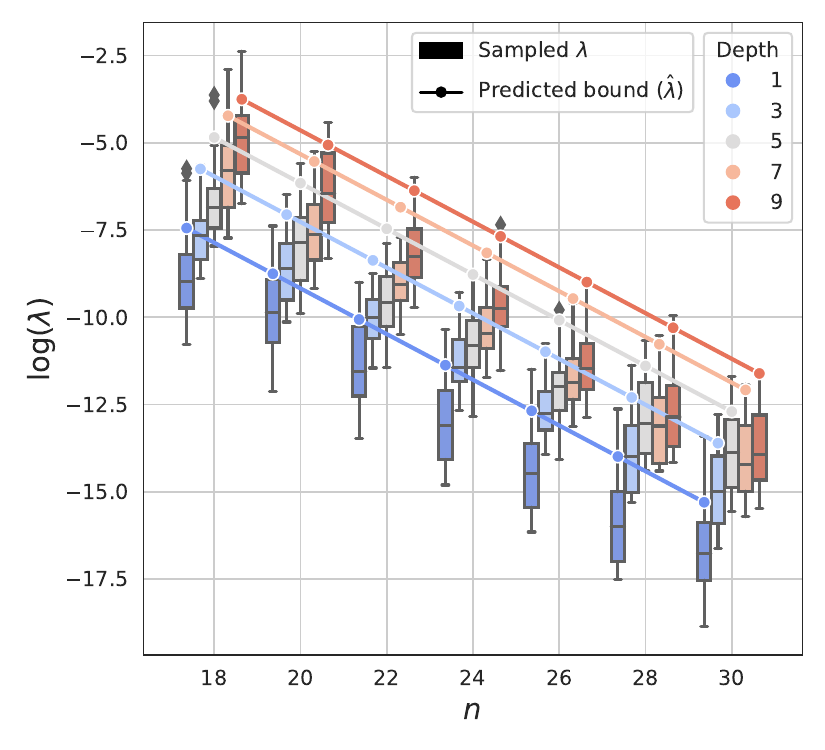}
        \caption{max-$k$-vertex-cover}
        \label{fig:popt_fit_sub3}
    \end{subfigure}
    \caption{Comparison of the probability of sampling the optimal solution $\lambda$ sampled from GM-QAOA circuits with their predictive upper bounds, based on problem size $n$ and circuit depth $p$. Upper bounds for the traveling salesman problems are calculated using Equation~(\ref{eqn:tspub}), while those for other problem types are derived from Equation~(\ref{eqn:poptmodel}). Both line plots of predictive upper bounds and the $\lambda$ values in box plots demonstrate a similar decreasing trend. This consistent trend confirms the reliability of the predictive models and further validates the scalability issues of the GM-QAOA.} 
    \label{fig:popt_pred}
\end{figure}

With the help of our tight upper bound, we then establish the relationship between problem size $n$, circuit depth $p$, and $\lambda$ upper bound, by examining the trend in optimality density $\rho$ as $n$ increases. The solutions of the traveling salesman problems are encoded in permutation matrices, using $(n-1)^2$ bits, where $n$ represents the number of locations \cite{lucas2014ising}. When the distances between locations vary, there are only $2$ optimal solutions. Thus, the optimality density is given by,  
\begin{equation}
    \rho = \frac{2}{(n-1)!},
\end{equation}
and for a depth-$p$ GM-QAOA circuit, the upper bound on the probability of sampling the optimal solution is
\begin{equation}
    \lambda < \frac{2(2p+1)^2}{(n-1)!}.
    \label{eqn:tspub}
\end{equation}
Unlike the traveling salesman problem, for the max-$k$-colorable-subgraph and max-$k$-vertex-cover problems, we cannot directly derive optimality density $\rho$ from $k$ and $n$, instead, we numerically examine $\rho$ against $n$, where $n$ is the graph vertices number. 
As illustrated in Figure~\ref{fig:opt_density}, the optimality density $\rho$ for both problems shows an exponential decrease as the problem size $n$ increases. Using linear regression on $\log(\rho)$ against $n$, we develop a predictive model for the upper bound of $\lambda$ as,
\begin{align}
    \hat{\lambda}_{\boldsymbol{\theta}}(n,p) \coloneqq \min\left((2p+1)^2 e^{\theta_1 n + \theta_2},1\right)
    \label{eqn:poptmodel}
\end{align}
where $\boldsymbol{\theta} = [\theta_1, \theta_2]$ is the regression coefficient. From Equation (\ref{eqn:tspub}) and (\ref{eqn:poptmodel}), increasing circuit depth $p$ offers a quadratic enhancement on the $\lambda$. However, this enhancement cannot offset the rapid decrease caused by the factorial term $\frac{1}{(n-1)!}$ (in (\ref{eqn:tspub})) or exponential term $e^{\theta_1 n}$ (in (\ref{eqn:poptmodel})) with a negative $\theta_1$, as the problem size $n$ grows larger, leading to scalability issues and potentially high computational costs for large problem instances.

Figure~\ref{fig:popt_pred} presents the comparison between the predictive upper bounds and $\lambda$ values sampled from optimized GM-QAOA circuits. The predictive upper bounds exhibit a consistent downward trend relative to the sampled $\lambda$ values, validating the predictive model that maps $n$ and $p$ to the $\lambda$ upper bound. Meanwhile, as the problem size increases, the benefits of increasing the circuit depth progressively diminish on a logarithmic scale, further demonstrating the exponential growth in resource requirements for GM-QAOA to maintain a constant $\lambda$.

\begin{figure}[t]
    \centering
    \begin{subfigure}[b]{0.32\textwidth}
        \centering
        \includegraphics[width=\textwidth]{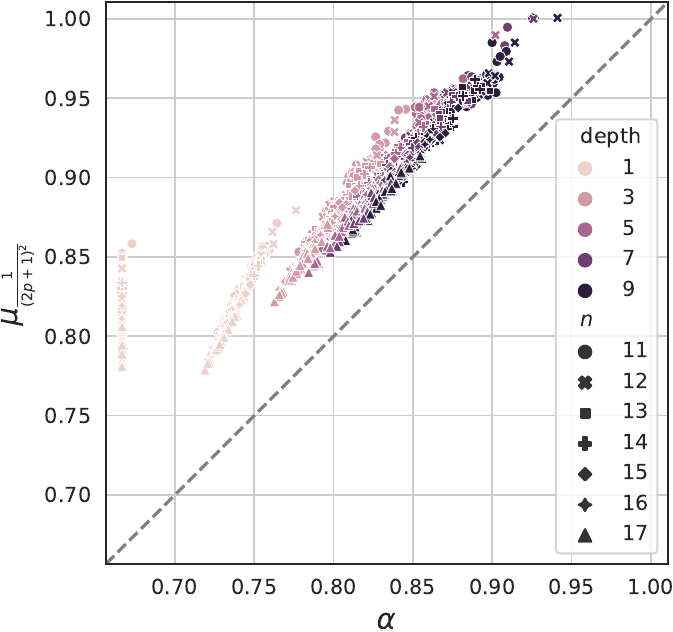} 
        \caption{max-$k$-colorable-subgraph}
        \label{fig:approx_sub1}
    \end{subfigure}
    \begin{subfigure}[b]{0.32\textwidth}
        \centering
        \includegraphics[width=\textwidth]{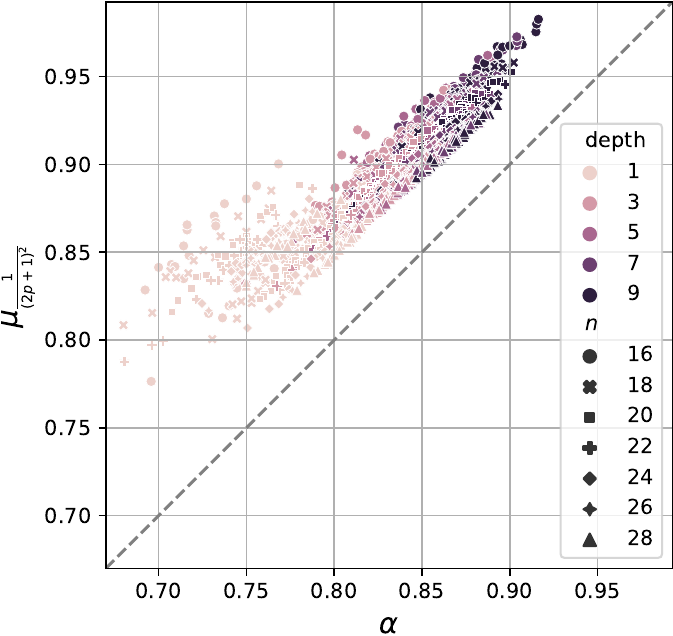}
        \caption{max-cut}
        \label{fig:approx_sub2}
    \end{subfigure}
    \begin{subfigure}[b]{0.32\textwidth}
        \centering
        \includegraphics[width=\textwidth]{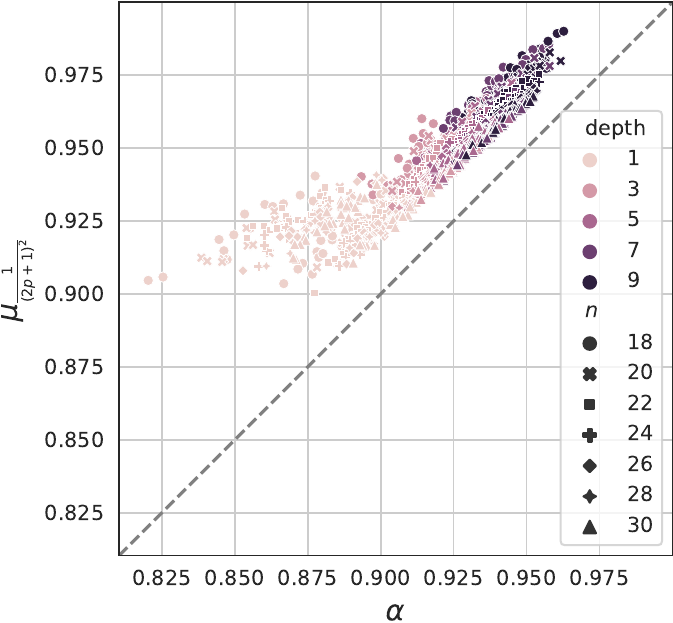}
        \caption{max-$k$-vertex-cover}
        \label{fig:approx_sub3}
    \end{subfigure}
    \caption{Comparison of approximation ratio $\alpha$ of the GM-QAOA circuit against the upper bound derived in Theorem~\ref{the:3}, where $\mu_r$ denotes the top-$r$-proportion mean-max ratio. Here, $r$ is set to $\frac{1}{(2p+1)^2}$, where p is the circuit depth. The dashed gray line represents the line of equality. From the scattering of data points, our theoretical bounds are consistent with empirical results.} 
    \label{fig:approx}
\end{figure}

\subsection{Approximation Ratio Scaling}
\label{sec:alpha}
This section expands our investigation into the approximation ratio, $\alpha$. In Figure~\ref{fig:approx}, we examine $\alpha$ values obtained from optimized GM-QAOA circuits for three maximizing optimization problems: max-$k$-colorable-subgraph, max-cut, and max-$k$-vertex-cover. Each point on the plot pairs sampled $\alpha$ values with their theoretical upper bounds, as derived in Theorem~\ref{the:3}. The data points uniformly cluster upper left of the line of equality, maintaining similar margins. This consistent positioning validates both the accuracy and robustness of the upper bound.

Having validated the upper bound, we next explore the relationship between the problem size $n$, circuit depth $p$, and $\alpha$ upper bound, by identifying the pattern of the top-$r$-proportion mean-max ratio $\mu_r$, which the upper bound depends on. Figure~\ref{fig:mu} visualizes how $\mu_r$ varies with the changes in the problem size $n$, and $\log(\frac{1}{r})$, where data points with similar $\mu_r$ values are connected by lines. The plot shows that for a constant $\mu_r$, $\log(\frac{1}{r})$ exhibits an approximately linear behavior with respect to problem size $n$. We further assume that $\log(\frac{1}{r})$ has a quadratic relationship with $\mu_r$, then the fitting model for $\mu_r$ can be built as: 
\begin{align}
    \hat{\mu}_{\boldsymbol{\theta}}(n,r) \coloneqq \sqrt{\frac{-\log(r)}{\theta_1 n + \theta_2}} + \frac{\theta_3}{1+e^{-\theta_4(n-\theta_5)}},
    \label{eqn:approxmodel1}
\end{align}
where $\boldsymbol{\theta} = [\theta_1, \theta_2, \dots, \theta_5]$ is the regression parameter.
For the max-$k$-colorable-subgraph problems, since the sampled distribution exhibits the same value for $\mu_1$, we adjust the model as: 
\begin{align}
    \hat{\mu}_{\boldsymbol{\theta}}(n,r) \coloneqq \sqrt{\frac{-\log(r)}{\theta_1 n + \theta_2}} + \theta_3,
    \label{eqn:approxmodel2}
\end{align}
where $\theta_3$ is specifically set to equal $\mu_1$.
The fitting results are presented in Table~\ref{tab:reg} and visualized in Figure~\ref{fig:mu_fit}.

\begin{figure}[t]
    \centering
    \begin{subfigure}[b]{0.32\textwidth}
        \centering
        \includegraphics[width=\textwidth]{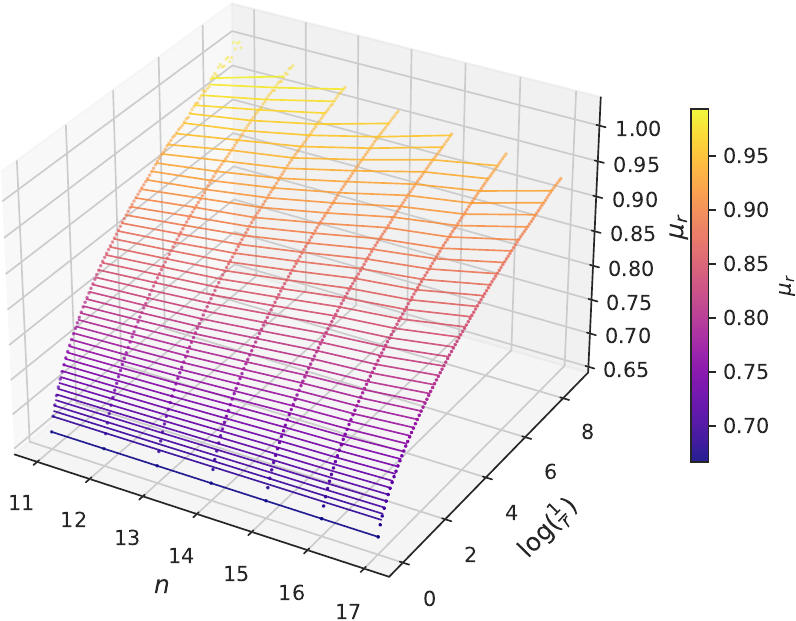} 
        \caption{max-$k$-colorable-subgraph}
        \label{fig:mu_sub1}
    \end{subfigure}
    \begin{subfigure}[b]{0.32\textwidth}
        \centering
        \includegraphics[width=\textwidth]{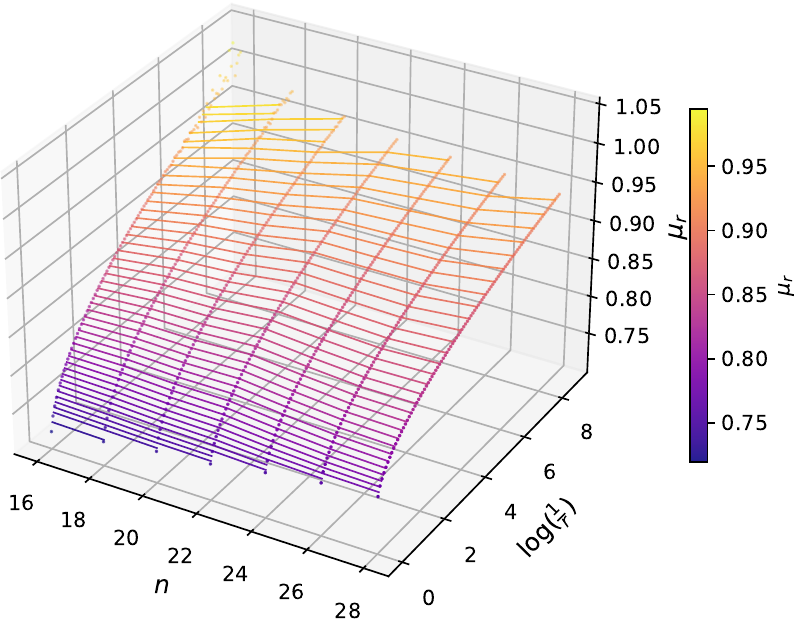}
        \caption{max-cut}
        \label{fig:mu_sub2}
    \end{subfigure}
    \begin{subfigure}[b]{0.32\textwidth}
        \centering
        \includegraphics[width=\textwidth]{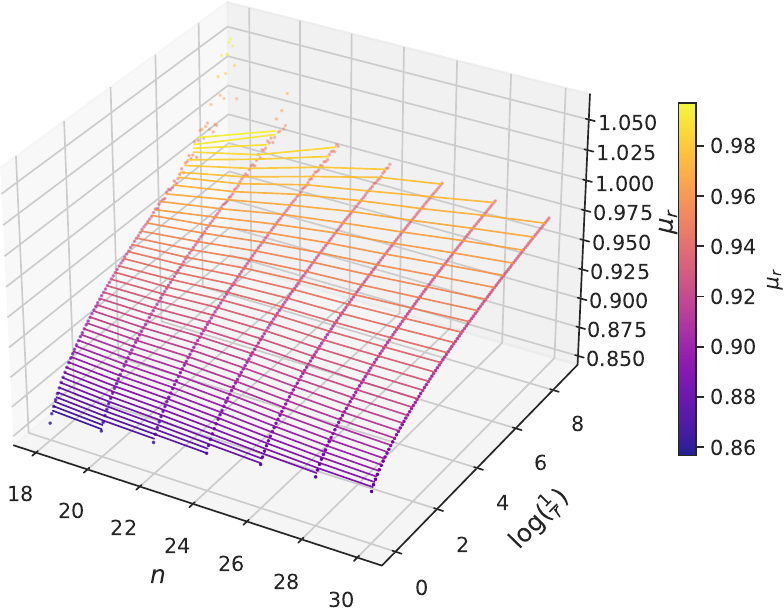}
        \caption{max-$k$-vertex-cover}
        \label{fig:mu_sub3}
    \end{subfigure}
    \caption{Top-$r$-proportion mean-max ratio $\mu_r$ of max-$k$-colorable-subgraph, max-cut and max-$k$-vertex-cover problem. 
    Each data point is composed of $(n, \log(\frac{1}{r}), \overline{\mu}_r)$, where $n$ is the problem size, 
    and $\overline{\mu}_r$ is the averaged value of $\mu_r$ over 48 different instances of the respective problems. 
    The lines connect data points with the similar $\overline{\mu}_r$ value. The color bar maps the colors of lines and data points to the corresponding $\overline{\mu}_r$ value.} 
    \label{fig:mu}
\end{figure}

\begin{figure}[t]
    \centering
    \begin{subfigure}[b]{0.32\textwidth}
        \centering
        \includegraphics[width=\textwidth]{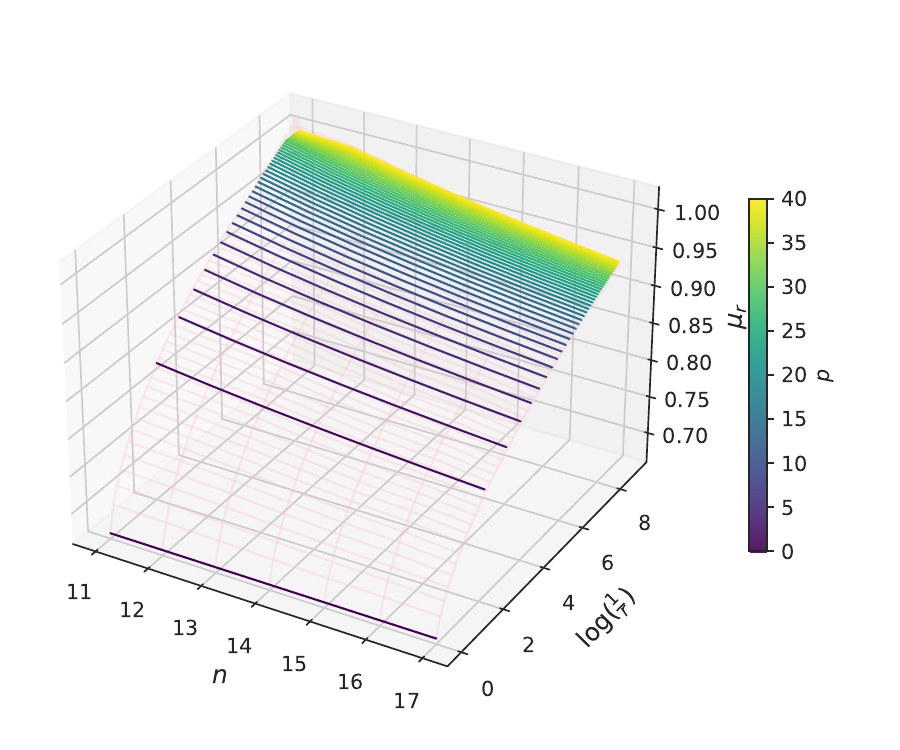} 
        \caption{max-$k$-colorable-subgraph}
        \label{fig:mu_fit_sub1}
    \end{subfigure}
    \begin{subfigure}[b]{0.32\textwidth}
        \centering
        \includegraphics[width=\textwidth]{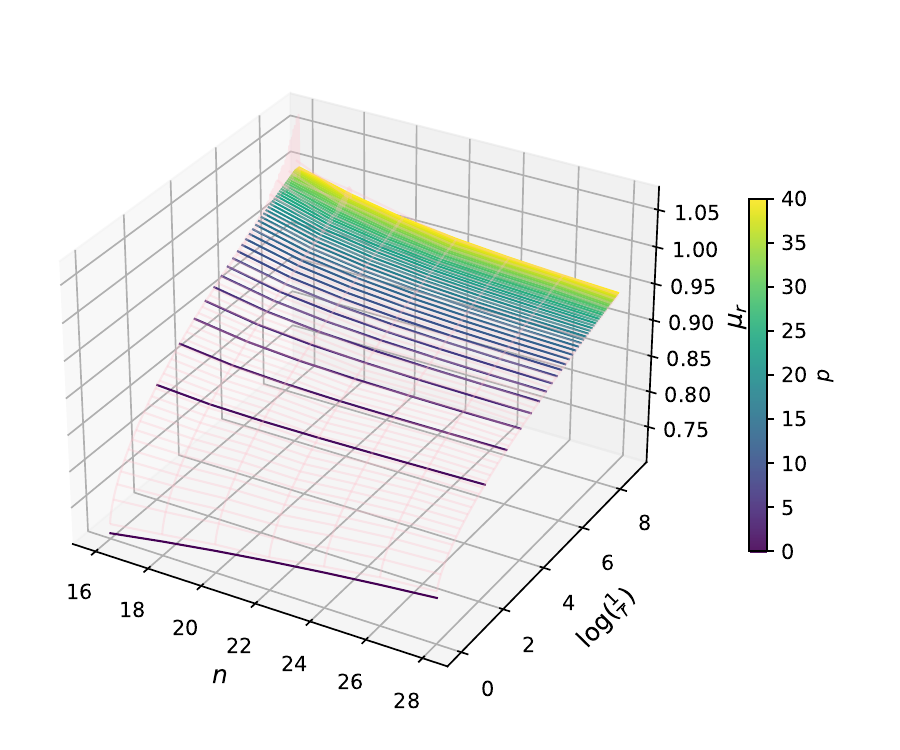}
        \caption{max-cut}
        \label{fig:mu_fit_sub2}
    \end{subfigure}
    \begin{subfigure}[b]{0.32\textwidth}
        \centering
        \includegraphics[width=\textwidth]{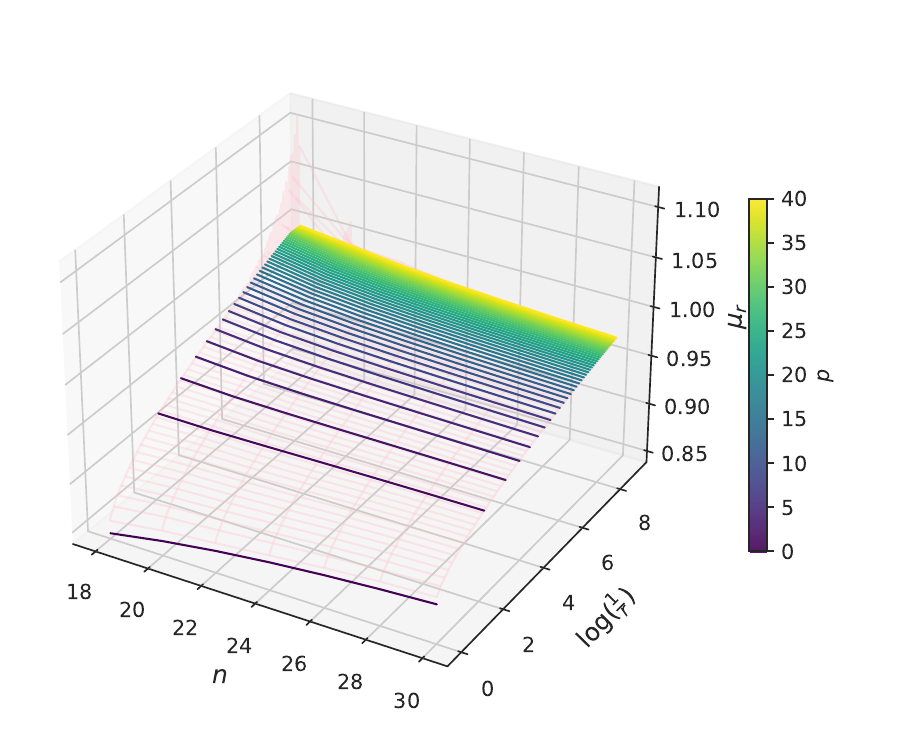}
        \caption{max-$k$-vertex-cover}
        \label{fig:mu_fit_sub3}
    \end{subfigure}
    \caption{Regression results for the top-$r$-proportion mean-max ratio $\mu_r$. 
    The averaged values of sampled $\mu_r$ are visualized using pink wireframes. 
    The fitted values, $\hat{\mu}_{\boldsymbol{\theta}}(n,r)$, where $r$ is set to $\frac{1}{(2p+1)^2}$, 
    are depicted using lines colored with the Viridis palette. 
    The color bar maps these colors to their corresponding $p$ values.} 
    \label{fig:mu_fit}
\end{figure}

Using the fitted model for $\mu_r$, we can now predict the upper bound for the approximation ratio achieved by depth-$p$ GM-QAOA as,
\begin{align}
    \hat{\alpha}_{\boldsymbol{\theta}}(n,p) \coloneqq \min\left(\hat{\mu}_{\boldsymbol{\theta}}\left(n,\frac{1}{(2p+1)^2}\right),1\right).
\end{align}
Figure~\ref{fig:fit_approx} presents a comparative analysis between $\hat{\alpha}$ and the empirically obtained approximation ratios $\alpha$ from optimized GM-QAOA circuits across depths ranging from $1$ to $9$. 
The results show that most of the predicted upper bounds are higher than the true values of $\alpha$, making these upper bound predictions reliable, and validating the effectiveness of the predictive model. From the definitions of the fitting models (\ref{eqn:approxmodel1}) and (\ref{eqn:approxmodel2}), the second term converges to or is originally a constant. Since the enhancement achieved by increasing $p$ is trapped in a logarithm, it becomes challenging to counterbalance the decrease in $\alpha$ caused by the scaling up of the problem size. Consequently, maintaining a consistent approximation ratio as the problem size increases would require the depth of GM-QAOA to grow exponentially. Based on this observation, we propose the following conjecture.
\begin{conjecture}
    Given a family of problems characterized by a certain objective function structure. As the problem size increases, achieving a target approximation ratio greater than a certain value requires the depth of GM-QAOA to grow exponentially with respect to the problem size.
\end{conjecture}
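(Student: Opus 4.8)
The plan is to reduce the conjecture to a purely combinatorial statement about the distribution of objective values via Theorem~\ref{the:3}, and then establish that statement by a large-deviation estimate. Since a depth-$p$ GM-QAOA circuit attains $\alpha \leqslant \mu_{1/(2p+1)^2}$, requiring $\alpha \geqslant \alpha^\ast$ for a fixed target $\alpha^\ast$ forces $\mu_{1/(2p+1)^2} \geqslant \alpha^\ast$. Fix the problem family and let $\mu_r^{(n)}$ be the top-$r$-proportion mean-max ratio of a size-$n$ instance. As $\mu_r^{(n)}$ is non-increasing in $r$ (enlarging the averaging window only appends solutions of no larger objective value; the ceiling $\lceil r|F|\rceil$ needs a small amount of care here), the condition $\mu_{1/(2p+1)^2}^{(n)} \geqslant \alpha^\ast$ is equivalent to $\tfrac{1}{(2p+1)^2} \leqslant r_n^\ast$, where $r_n^\ast \coloneqq \sup\{r : \mu_r^{(n)} \geqslant \alpha^\ast\}$, i.e.\ to $p \geqslant \tfrac12\big((r_n^\ast)^{-1/2} - 1\big)$. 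Thus the conjecture follows once we show $r_n^\ast$ decays exponentially in $n$, provided $\alpha^\ast$ exceeds the limiting mean ratio $\bar\mu \coloneqq \lim_n \mathbb{E}_f[C(f)]/\max_f C(f)$; this $\bar\mu$ is the ``certain value'' in the statement, since under concentration $\mu_r^{(n)} \to \bar\mu$ for every fixed $r$, so below $\bar\mu$ no obstruction arises and constant depth is never ruled out.

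Next I would make precise the ``certain objective function structure'': I would take it to mean that $C$ is a sum of $\Theta(n)$ bounded-range local terms that are sufficiently weakly dependent under the uniform distribution on $F$ -- this covers max-cut, max-$k$-colorable-subgraph, and max-$k$-vertex-cover, whose terms are indexed by edges. Under such a hypothesis, $C(f)/\max_f C(f)$ concentrates around $\bar\mu$, and for any $\beta > \bar\mu$ the fraction $q_n(\beta) \coloneqq |\{f : C(f) \geqslant \beta \max_f C(f)\}|/|F|$ obeys $q_n(\beta) \leqslant e^{-I(\beta) n + o(n)}$ with a rate $I(\beta) > 0$, via a Chernoff/Hoeffding bound for product search spaces and a bounded-differences or Janson-type inequality to absorb the dependence for graph-based objectives. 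The remaining step links $q_n$ to $r_n^\ast$: with $r = q_n(\beta)$, $\mu_r^{(n)}$ is essentially the conditional mean $\mathbb{E}[C(f) \mid C(f) \geqslant \beta \max_f C(f)]/\max_f C(f)$, and the exponential thinning of the upper tail pins this conditional mean just above $\beta$, so $\mu_{q_n(\beta)}^{(n)} \to \beta$. Taking $\beta = \alpha^\ast - \epsilon$ (legitimate since $\alpha^\ast > \bar\mu$) yields $r_n^\ast \leqslant q_n(\alpha^\ast - \epsilon) \leqslant e^{-I(\alpha^\ast - \epsilon) n + o(n)}$, hence $p = \Omega\big(e^{\frac12 I(\alpha^\ast - \epsilon) n}\big)$.

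The main obstacle I anticipate is the last link -- controlling $\mu_r^{(n)}$, a mean of the \emph{top} order statistics, rather than a single tail probability. One must rule out that, although $C \geqslant \alpha^\ast \max C$ is exponentially rare, the handful of solutions attaining it are concentrated so near $\max C$ (and are numerous relative to the next band) that the average over a slightly larger window stays above $\alpha^\ast$; this needs a two-sided large-deviation estimate (an upper \emph{and} a matching lower bound on $q_n$, or monotonicity of the conditional mean) together with the bookkeeping of $\lceil r|F|\rceil$ noted above. A secondary difficulty is that $\max_f C(f)$ is itself a random, instance-dependent quantity whose scaling (say $\gamma n + o(n)$) must be pinned down; for the listed problems this is known or follows from first/second-moment arguments, but formulating a single clean hypothesis that simultaneously delivers concentration of $C$, a positive large-deviation rate, and control of $\max_f C(f)$ is the crux of upgrading the conjecture to a theorem. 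Accordingly, I would first prove the statement for a model family with an i.i.d.\ product search space to isolate the mechanism, then import concentration inequalities for dependent sums to reach the graph-based cases.
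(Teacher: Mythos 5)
This statement is labeled a conjecture for a reason: the paper does not prove it. It is supported only by the regression models (\ref{eqn:approxmodel1})--(\ref{eqn:approxmodel2}) fitted to numerically sampled $\mu_r$, plus the qualitative observation that the depth enhancement enters $\hat{\mu}$ only through $\log\frac{1}{r} = 2\log(2p+1)$. Your proposal is therefore not comparable to an existing argument; it is an attempt to upgrade the conjecture to a theorem, and the first step --- reducing the claim via Theorem~\ref{the:3} to the assertion that $r_n^\ast = \sup\{r : \mu_r^{(n)} \geqslant \alpha^\ast\}$ decays exponentially in $n$ --- is exactly the right reduction and is logically sound (a bound on the upper bound is a fortiori a bound on the achieved $\alpha$). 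Your identification of the threshold value as $\bar\mu = \lim_n \mathbb{E}_f[C(f)]/\max_f C(f)$ also matches the paper's empirical picture: it is precisely the role played by the plateau term $\theta_3$ in the fitted models.

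However, the proposal is a programme with genuine gaps rather than a proof, and the gaps you flag are the substantive ones, not formalities. First, the step from the tail probability $q_n(\beta)$ to the top-order-statistics mean $\mu_r^{(n)}$ requires a matching large-deviation \emph{lower} bound (or strict monotonicity of the rate function) to pin the conditional mean $\mathbb{E}[C \mid C \geqslant \beta \max C]$ near $\beta \max C$; a one-sided Chernoff bound cannot exclude the scenario where the rare high-objective solutions are concentrated at the very top and drag $\mu_r$ above $\alpha^\ast$ for $r$ well beyond $q_n(\alpha^\ast-\epsilon)$. Second, for three of the four problem families in the paper the search space $F$ is \emph{constrained} (permutation matrices for TSP, one-hot color assignments, fixed Hamming weight for max-$k$-vertex-cover), so the uniform measure on $F$ is not a product measure and the Chernoff/McDiarmid machinery you invoke does not apply off the shelf; you would need exchangeable-pair or negative-association versions, and for TSP the relevant statistic is a sum over a random permutation. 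Third, $\mu_r$ is not globally monotone in $r$ because the numerator sums $\lceil r\vert F\vert\rceil$ terms while the denominator carries $r\vert F\vert$, producing upward jumps at the grid points $r = k/\vert F\vert$; this is harmless when $r\vert F\vert \to \infty$ but must be handled before asserting the equivalence $\mu_{1/(2p+1)^2} \geqslant \alpha^\ast \Leftrightarrow (2p+1)^{-2} \leqslant r_n^\ast$. None of these obstacles looks fatal for the graph problems on bounded-degree instances, but until they are discharged --- and until ``a certain objective function structure'' is replaced by a precise hypothesis delivering concentration of $C$, a strictly positive rate function above $\bar\mu$, and control of $\max_f C(f)$ --- the statement remains a conjecture.
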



\begin{table}[t]
    \centering
    \caption{Regression results for top-$r$-proportion mean-max ratio $\mu_r$}
    \begin{tabular}{@{}l
                    S[table-format=1.2e+1]
                    S[table-format=2.2e+1]
                    S[table-format=1.2e-1]
                    S[table-format=1.2e-1]
                    S[table-format=2.2e+1]@{}}
    \toprule
    & \multicolumn{5}{c}{parameters of regression model} \\
    \cmidrule(lr){2-6}
    & {$\theta_1$} & {$\theta_2$} & {$\theta_3$} & {$\theta_4$} & {$\theta_5$} \\
    \midrule
    max-$k$-colorable-subgraph & 7.68e0  & -1.12e1 & 6.67e-1  & {}  & {}  \\
    max-cut                    & 1.38e1  & -1.20e2 & 8.09e-1  & 7.19e-2  & -1.10e1 \\
    max-$k$-vertex-cover       & 5.21e1  & -5.98e2 & 8.90e-1  & 1.20e-1  & -6.21e0 \\
    \bottomrule
    \end{tabular}
    \label{tab:reg}
\end{table}

\begin{figure}[t]
    \centering
    \begin{subfigure}[b]{0.32\textwidth}
        \centering
        \includegraphics[width=\textwidth]{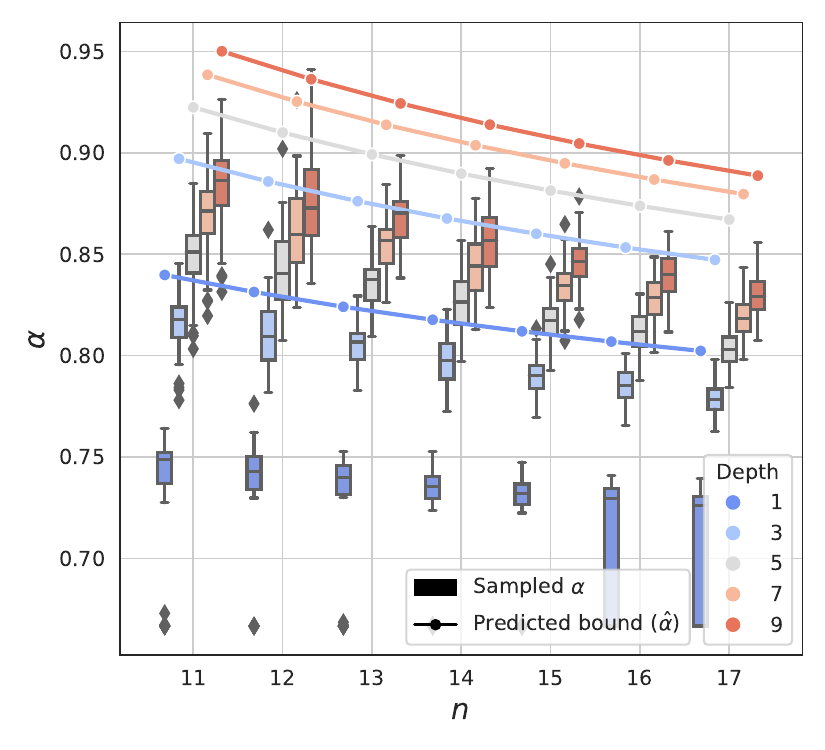} 
        \caption{max-$k$-colorable-subgraph}
        \label{fig:fit_approx_sub1}
    \end{subfigure}
    \begin{subfigure}[b]{0.32\textwidth}
        \centering
        \includegraphics[width=\textwidth]{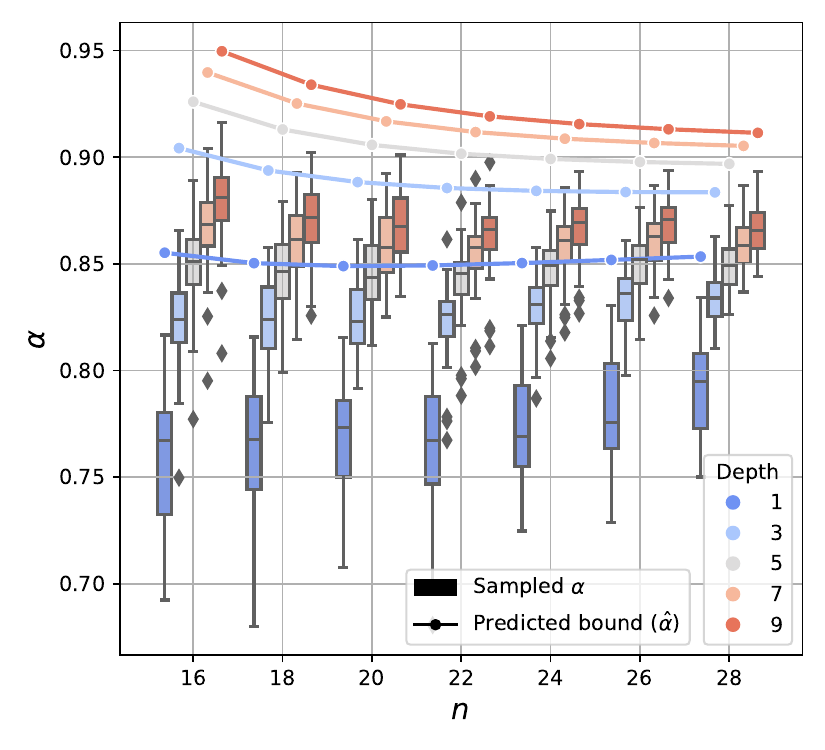}
        \caption{max-cut}
        \label{fig:fit_approx_sub2}
    \end{subfigure}
    \begin{subfigure}[b]{0.32\textwidth}
        \centering
        \includegraphics[width=\textwidth]{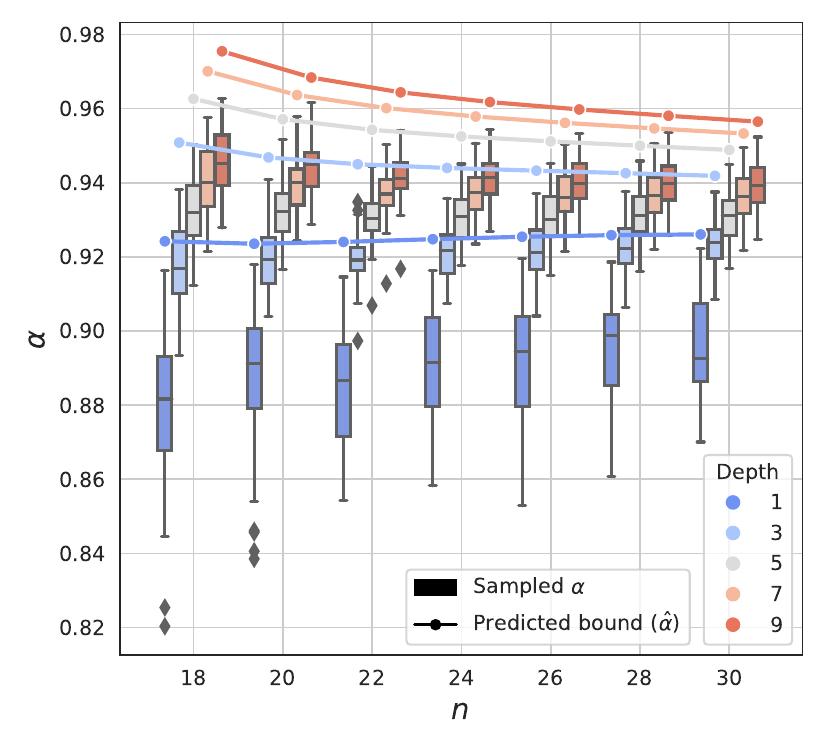}
        \caption{max-$k$-vertex-cover}
        \label{fig:fit_approx_sub3}
    \end{subfigure}
    \caption{Comparison of the approximation ratios $\alpha$ sampled from GM-QAOA circuits against the predictive upper bounds defined in Equations (\ref{eqn:approxmodel1}) and (\ref{eqn:approxmodel2}). The box plots representing sampled $\alpha$ values exhibit a trend consistent with the predictive line plots, with the sampled values nearly always falling within the bounds established by the line plots. This consistent tight bounding supports our conjecture that GM-QAOA also faces scalability challenges for maintaining approximation ratios as problem sizes increase.} 
    \label{fig:fit_approx}
\end{figure}

\section{Conclusion}
In this work, we prove an upper bound $\frac{(2p+1)^2}{\vert F \vert}$ on the probability of measuring a computational basis state from a depth-$p$ GM-QAOA state, where $\vert F \vert$ denotes the number of solutions within the search space. Based on this result, we derive upper bounds on two performance metrics for the GM-QAOA: the probability of sampling the optimal solution $\lambda$, and the approximation ratio $\alpha$. These upper bounds are formulated in terms of the statistical metrics we defined for objective value distribution, namely the optimality density $\rho$ and the top-$r$-proportion mean-max ratio $\mu_r$, respectively.

We validate the theoretical performance bounds by comparing them against $\lambda$ and $\alpha$ values obtained from optimized GM-QAOA circuits, confirming their reliability and tightness. Following this, our regression analysis on $\rho$ and $\mu_r$ further establishes the relationships between the problem size, circuit depth, and the performance limits of $\lambda$ and $\alpha$. The results indicate scalability challenges for the GM-QAOA. As the enhancement in measurement probability offered by increasing the depth of GM-QAOA is at best quadratic, GM-QAOA would necessitate an exponential increase in circuit resources to maintain a consistent level of performance in either the aspect of $\lambda$ and $\alpha$.

\begin{appendices}

\section{Proof of Theorem 1}
\label{sec:profThe1}
In this section, 
we present the derivation of our main result, Theorem~\ref{the:1}, 
which provides the upper bound on the probability of measuring a computational basis state from a GM-QAOA state with a given depth. 
We begin by introducing two key notations and deriving two essential lemmas.

\begin{notation}
    Define $\binom{[p]}{k}$ as the set of all possible combinations of choosing $k$ elements from $[p] \coloneqq \{1, 2, \cdots, p\}$,
    where each combination $\mathbf{s} \in \binom{[p]}{k}$ is a vector whose entries are in ascending order.
\end{notation}
\begin{notation}
    Let $S$ be a set of vectors, where each $\mathbf{s} = [s_1,s_2,\cdots,s_{\vert \mathbf{s} \vert}]$. 
    Define the operations $\lhd_nS$ and $S\rhd_n$ as follows, 
    \begin{equation}
        \begin{split}
            & \lhd_nS \coloneqq \left\{ [n, s_1,s_2,\cdots,s_{\vert \mathbf{s} \vert}] \middle\vert \mathbf{s} \in S \right\},\; (s_0 \coloneqq n, \forall \mathbf{s} \in \lhd_n S),\\
            & S\rhd_n \coloneqq \left\{ [s_1,s_2,\cdots,s_{\vert \mathbf{s}\vert}, n ] \middle\vert \mathbf{s} \in S \right\},
        \end{split}
    \end{equation}
    where $\lhd_nS$ prepends and $S\rhd_n$ appends the number $n$ to every vector in the set $S$, respectively, resulting in new sets of vectors.
\end{notation}

\begin{lemma}
    \label{lemma:umcancel}
    Consider a depth-$p$ GM-QAOA circuit with parameters $\boldsymbol{\beta} \in S_m, S_m \coloneqq \bigcup_{\mathbf{s}\in \binom{[p]}{m}}\{\boldsymbol{\beta} \mid \beta_i = 2k_i \pi, k_i \in \mathbb{Z}, \forall i \in \mathbf{s}; \beta_j \in \mathbb{R}, \forall j \notin \mathbf{s}\}$, 
    containing $m$ elements as integral multiples of $2\pi$, while the remaining $p-m$ parameters are any real number, 
    then, this depth-$p$ circuit reduces to a depth-$(p\!-\!m)$ circuit.
\end{lemma}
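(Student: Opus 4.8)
The plan is to reduce the claim to a few elementary observations about the GM-QAOA primitives and then run an induction on $m$. First I would note that, by \eqref{eqn:UM}, if $\beta_i = 2k_i\pi$ with $k_i\in\mathbb{Z}$ then $e^{-i\beta_i}=1$, so $U^{(M)}_F(\beta_i) = I-(1-e^{-i\beta_i})\vert F\rangle\langle F\vert = I$; thus every mixing layer whose angle is an integral multiple of $2\pi$ is the identity and can be deleted. Second, \eqref{eqn:UP} shows the phase-separation operators are diagonal in the computational basis and compose additively in the angle, $U^{(P)}_{\mathcal{C}}(\gamma)\,U^{(P)}_{\mathcal{C}}(\gamma') = U^{(P)}_{\mathcal{C}}(\gamma+\gamma')$, since both sides send $\vert f\rangle$ to $e^{-i(\gamma+\gamma')\mathcal{C}(f)}\vert f\rangle$. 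Third, because each $U^{(P)}_{\mathcal{C}}(\gamma)$ has unit-modulus diagonal entries, a phase-separation operator applied last does not change any computational-basis measurement probability: $\vert\langle f\vert U^{(P)}_{\mathcal{C}}(\gamma)\vert\phi\rangle\vert^2 = \vert\langle f\vert\phi\rangle\vert^2$. Accordingly I would read ``reduces to'' as ``yields the same measurement distribution $\vert\langle f\vert\psi\rangle\vert^2$'', which is exactly what Theorem~\ref{the:1} uses.

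Next I would induct on $m$. The case $m=0$ is immediate, since $S_0$ is the full parameter space and a depth-$p$ circuit reduces to itself. For $m\geq 1$, pick any index $i$ with $\beta_i$ an integral multiple of $2\pi$ and delete the identity layer $U^{(M)}_F(\beta_i)$. If $i<p$, the phase layers $U^{(P)}_{\mathcal{C}}(\gamma_{i+1})$ and $U^{(P)}_{\mathcal{C}}(\gamma_i)$ now sit side by side and fuse into $U^{(P)}_{\mathcal{C}}(\gamma_i+\gamma_{i+1})$; after relabelling the remaining layers, this is a depth-$(p-1)$ GM-QAOA circuit whose parameter vector still has $m-1$ entries equal to integral multiples of $2\pi$, hence lies in the corresponding $S_{m-1}$ for depth $p-1$. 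If $i=p$, then $U^{(P)}_{\mathcal{C}}(\gamma_p)$ becomes the last-applied operator and, by the third remark above, may be dropped without affecting the measurement probabilities, again leaving a depth-$(p-1)$ circuit in $S_{m-1}$. In both cases the induction hypothesis collapses this depth-$(p-1)$ circuit to depth $(p-1)-(m-1)=p-m$, and since ``same measurement distribution'' is transitive, the original depth-$p$ circuit reduces to a depth-$(p-m)$ circuit.

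The only step that needs genuine care is the boundary case $i=p$: when the $2\pi$-multiple mixer is the final layer there is no later phase layer to absorb the dangling $U^{(P)}_{\mathcal{C}}(\gamma_p)$, so there the reduction is an equality of measurement statistics rather than of states — which is why the third remark is needed and why the lemma must be understood in that sense. Everything else is bookkeeping: writing the relabelling of layer indices cleanly enough that membership in $S_{m-1}$ at depth $p-1$ is manifest, so the induction closes without gaps.
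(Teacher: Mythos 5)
Your proposal is correct and follows essentially the same route as the paper: both delete the identity mixer $U^{(M)}_F(2k\pi)=I$, merge the two now-adjacent phase layers via $U^{(P)}_{\mathcal{C}}(\gamma_{j+1})U^{(P)}_{\mathcal{C}}(\gamma_j)=U^{(P)}_{\mathcal{C}}(\gamma_{j+1}+\gamma_j)$ for $j<p$, and discard a trailing $U^{(P)}_{\mathcal{C}}(\gamma_p)$ on the grounds that it cannot change computational-basis measurement probabilities. Your explicit induction on $m$ and the remark that ``reduces to'' must be read as equality of measurement statistics (rather than of states) in the $j=p$ case are just slightly more careful packaging of the paper's argument.
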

\begin{proof}
    From Equation~\ref{eqn:UM}, for any search space $F$, operation $U^{(M)}_F(2k\pi) = I$, $\forall k \in \mathbb{Z}$, 
    resulting in no alteration to the circuit's behavior.

    When $\beta_p = 2k\pi, k \in \mathbb{Z}$, the circuit, 
    \begin{equation}
        \vert \psi_{p,\mathcal{C},F}(\boldsymbol{\gamma},\boldsymbol{\beta})\rangle = U^{(P)}_{\mathcal{C}}(\gamma_p) U^{(M)}_F(\beta_{p-1}) U^{(P)}_{\mathcal{C}}(\gamma_{p-1}) \cdots U^{(M)}_F(\beta_1)U^{(P)}_{\mathcal{C}}(\gamma_1)\vert F \rangle.
    \end{equation}
    According to Equation~\ref{eqn:UP}, 
    the last operation, $U^{(P)}_{\mathcal{C}}(\gamma_p)$ merely shifts the phase of each computational basis state without altering the measurement probabilities. 
    Consequently, operation $U^{(M)}_F(\beta_p) U^{(P)}_{\mathcal{C}}(\gamma_p)$ can be considered canceled.

    When $\beta_j = 2k\pi, k \in \mathbb{Z}, j \neq p$, the circuit, 
    \begin{equation}
        \vert \psi_{p,\mathcal{C},F}(\boldsymbol{\gamma},\boldsymbol{\beta})\rangle = \cdots U^{(M)}_F(\beta_{j+1})U^{(P)}_{\mathcal{C}}(\gamma_{j+1})U^{(P)}_{\mathcal{C}}(\gamma_j)  \cdots \vert F \rangle.
    \end{equation}
    From Equation~\ref{eqn:UP}, 
    we can derive, 
    \begin{equation}
        \begin{split}
            U^{(P)}_{\mathcal{C}}(\gamma_{j+1})U^{(P)}_{\mathcal{C}}(\gamma_j)\vert f \rangle = & e^{-i(\gamma_{j+1}+\gamma_j)\mathcal{C}(f)}\vert f \rangle \\
            = & U^{(P)}_{\mathcal{C}}(\gamma_{j+1}+\gamma_j)\vert f \rangle, \quad f \in F.
        \end{split}
    \end{equation}
    Hence, let $\gamma_{j+1} \leftarrow \gamma_{j+1}+\gamma_{j}$, the operation $U^{(M)}_F(\beta_{j})U^{(P)}_{\mathcal{C}}(\gamma_{j})$ is equivalent to being canceled. 
\end{proof}

\begin{lemma}
    \label{lemma:upcancel}
    Given a problem defined with the search space $F$. Consider a depth-$p$ GM-QAOA circuit using a phase function $\mathcal{C}$, 
    with parameters $\boldsymbol{\gamma} \in S_m, S_m \coloneqq \bigcup_{\mathbf{s}\in \binom{[p]}{m}}\{\boldsymbol{\gamma} \mid \gamma_i\mathcal{C}(f) = 2k_{i,f} \pi, k_{i,f} \in \mathbb{Z}, \forall i \in \mathbf{s}, \forall f \in F; \gamma_j \in \mathbb{R}, \forall j \notin \mathbf{s}\}$, 
    then, this depth-$p$ circuit reduces to a depth-$(p\!-\!m)$ circuit.
\end{lemma}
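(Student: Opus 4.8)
The plan is to mirror the proof of Lemma~\ref{lemma:umcancel}, interchanging the roles of the phase and mixing operators. First I would observe that whenever $\gamma_i \mathcal{C}(f) \in 2\pi\mathbb{Z}$ for every $f \in F$, Equation~\ref{eqn:UP} gives $U^{(P)}_{\mathcal{C}}(\gamma_i)\vert f\rangle = e^{-i\gamma_i\mathcal{C}(f)}\vert f\rangle = \vert f\rangle$ for all $f$, so $U^{(P)}_{\mathcal{C}}(\gamma_i)$ acts as the identity on $\mathrm{span}\{\vert f\rangle : f \in F\}$, hence on every state the circuit produces. Thus each of the $m$ indices listed in $\mathbf{s}$ contributes a trivial phase layer $U^{(P)}_{\mathcal{C}}(\gamma_i)$ that must be absorbed into neighbouring operators.

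The two algebraic ingredients I would record next are the dual of the phase-merging identity used in Lemma~\ref{lemma:umcancel} and a boundary identity. Since $\vert F\rangle\langle F\vert$ is a rank-one projector, a one-line expansion of $U^{(M)}_F(\beta) = I - (1-e^{-i\beta})\vert F\rangle\langle F\vert$ shows $U^{(M)}_F(\beta)\,U^{(M)}_F(\beta') = U^{(M)}_F(\beta+\beta')$; and since $\vert F\rangle$ is the $+1$ eigenvector of that projector, $U^{(M)}_F(\beta_1)\vert F\rangle = e^{-i\beta_1}\vert F\rangle$, i.e. a mixer acting directly on $\vert F\rangle$ only introduces a global phase.

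With these in hand I would treat $m=1$ first, with the single special index $i$. If $i > 1$, deleting $U^{(P)}_{\mathcal{C}}(\gamma_i) = I$ makes $U^{(M)}_F(\beta_i)$ and $U^{(M)}_F(\beta_{i-1})$ adjacent, and the merging identity replaces them by $U^{(M)}_F(\beta_i + \beta_{i-1})$; setting $\beta_{i-1} \leftarrow \beta_i + \beta_{i-1}$ removes the layer $U^{(M)}_F(\beta_i)U^{(P)}_{\mathcal{C}}(\gamma_i)$ and leaves a depth-$(p-1)$ GM-QAOA circuit. If $i = 1$, the innermost part is $U^{(M)}_F(\beta_1)U^{(P)}_{\mathcal{C}}(\gamma_1)\vert F\rangle = U^{(M)}_F(\beta_1)\vert F\rangle = e^{-i\beta_1}\vert F\rangle$, so layer $1$ disappears up to a global phase, again giving depth $p-1$. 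Iterating this over the indices of $\mathbf{s}$, taken from largest to smallest to keep the relabelling clean, removes one layer per index and yields a depth-$(p-m)$ circuit.

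The main obstacle I anticipate is the bookkeeping when several special indices are consecutive (for instance $\{i, i+1\} \subseteq \mathbf{s}$, or $\{1,2\} \subseteq \mathbf{s}$): one has to check that the successive merges and the possible global-phase absorptions compose consistently, which they do because the merging identity is associative and a global phase commutes through all subsequent operators. Apart from this bookkeeping the argument is a routine transcription of the proof of Lemma~\ref{lemma:umcancel}.
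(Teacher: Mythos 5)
Your proposal is correct and follows essentially the same route as the paper's proof: the identity action of $U^{(P)}_{\mathcal{C}}(\gamma_i)$ on the span of $\{\vert f\rangle\}_{f\in F}$, the mixer-merging identity $U^{(M)}_F(\beta_j)U^{(M)}_F(\beta_{j-1}) = U^{(M)}_F(\beta_j+\beta_{j-1})$ for the interior case, and the global-phase argument $U^{(M)}_F(\beta_1)\vert F\rangle = e^{-i\beta_1}\vert F\rangle$ for the boundary case $i=1$. Your explicit remark on iterating over the $m$ indices and handling consecutive special indices is bookkeeping the paper leaves implicit, but the substance is identical.
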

\begin{proof}
    From Equation~\ref{eqn:UP}, if $\gamma\mathcal{C}(f) = 2k_f\pi$, $k_f \in \mathbb{Z}$, $\forall f\in F$, 
    then $U^{(P)}_{\mathcal{C}}(\gamma)\vert f \rangle  = \vert f \rangle$, $\forall f \in F$, 
    resulting in no alteration to the circuit's behavior.

    When $\gamma_1\mathcal{C}(f) = 2k_f\pi$, $k_f \in \mathbb{Z}$, $\forall f\in F$, the circuit, 
    \begin{equation}
        \begin{split}
            \vert \psi_{p,\mathcal{C},F}(\boldsymbol{\gamma},\boldsymbol{\beta})\rangle & = U^{(M)}_F(\beta_{p}) U^{(P)}_{\mathcal{C}}(\gamma_p) U^{(M)}_F(\beta_{p-1}) \cdots U^{(M)}_F(\beta_2)U^{(P)}_{\mathcal{C}}(\gamma_2) U^{(M)}_F(\beta_1) \vert F \rangle \\
            & = e^{-i\beta_1} U^{(M)}_F(\beta_{p}) U^{(P)}_{\mathcal{C}}(\gamma_p) U^{(M)}_F(\beta_{p-1}) \cdots U^{(M)}_F(\beta_2)U^{(P)}_{\mathcal{C}}(\gamma_2) \vert F \rangle.
        \end{split}
    \end{equation}
    As global phase $e^{-i\beta_1}$ is not observable when measuring, the operation $U^{(M)}_F(\beta_1)U^{(P)}_{\mathcal{C}}(\gamma_1)$ can be regarded as canceled.

    When $\gamma_j\mathcal{C}(f) = 2k_f\pi$, $k_f \in \mathbb{Z}$, $\forall f\in F$ and $j \neq 1$, the circuit, 
    \begin{equation}
        \vert \psi_{p,\mathcal{C},F}(\boldsymbol{\gamma},\boldsymbol{\beta})\rangle = \cdots U^{(M)}_F(\beta_{j})U^{(M)}_F(\beta_{j-1})U^{(P)}_{\mathcal{C}}(\gamma_{j-1})  \cdots \vert F \rangle.
    \end{equation}
    From Equation~\ref{eqn:UM}, 
    we can derive, 
    \begin{equation}
        \begin{split}
            U^{(M)}_F(\beta_{j})U^{(M)}_F(\beta_{j-1}) = & I - (1-e^{-i(\beta_j+\beta_{j-1})})\vert F \rangle \langle F \vert \\
            = & U^{(M)}_F(\beta_{j}+\beta_{j-1}).
        \end{split}
    \end{equation}
    Hence, let $\beta_{j-1} \leftarrow \beta_j+\beta_{j-1}$, the operation $U^{(M)}_F(\beta_{j})U^{(P)}_{\mathcal{C}}(\gamma_{j})$ is equivalent to being canceled. 
\end{proof}

\begin{proof}[Proof of Theorem~\ref{the:1}]
    Expand the prepared states of the GM-QAOA, we get,
    \begin{align}
            & \vert \psi_{p,\mathcal{C},F}(\boldsymbol{\gamma},\boldsymbol{\beta})\rangle \nonumber\\
            = & U^{(M)}_F(\beta_p) U^{(P)}_{\mathcal{C}}(\gamma_p) \cdots U^{(M)}_F(\beta_2) U^{(P)}_{\mathcal{C}}(\gamma_2) U^{(M)}_F(\beta_1) U^{(P)}_{\mathcal{C}}(\gamma_1)\vert F \rangle \nonumber\\
            = & \left(U^{(P)}_{\mathcal{C}}(\gamma_p) - (1-e^{-i\beta_p})\vert F \rangle \langle F \vert U^{(P)}_{\mathcal{C}}(\gamma_p)\right) \cdots \left(U^{(P)}_{\mathcal{C}}(\gamma_1) - (1-e^{-i\beta_1})\vert F \rangle \langle F \vert U^{(P)}_{\mathcal{C}}(\gamma_1)\right) \vert F \rangle \nonumber\\
            = & \sum_{\substack{\mathbf{s} \in \binom{[p]}{k},\\ k\in \{0,1,\cdots,p\}}} \left(\prod_{j=1}^{k}(e^{-i\beta_{s_j}}-1)\right) U^{(P)}_{\mathcal{C}}(\sum_{j=s_k+1}^{p}\gamma_j)\vert F \rangle\langle F \vert U^{(P)}_{\mathcal{C}}(\sum_{j=s_{k-1}+1}^{s_k}\gamma_j)\cdots \vert F \rangle\langle F \vert U^{(P)}_{\mathcal{C}}(\sum_{j=1}^{s_1}\gamma_j)\vert F \rangle \nonumber\\
            = & \sum_{\substack{\mathbf{s} \in \binom{[p]}{k},\\ k\in \{0,1,\cdots,p\}}} \left(\prod_{j=1}^{k}(e^{-i\beta_{s_j}}-1)\right) \left(U^{(P)}_{\mathcal{C}}(\sum_{j=s_k+1}^{p}\gamma_j)\vert F \rangle\right) \frac{\displaystyle\sum_{f\in F}e^{-i \sum_{j=s_{k-1}+1}^{s_k}\gamma_j \mathcal{C}(f)}}{\vert F \vert} \cdots \frac{\displaystyle\sum_{f\in F}e^{-i \sum_{j=1}^{s_1}\gamma_j \mathcal{C}(f)}}{\vert F \vert} \nonumber\\
            = & \frac{1}{\sqrt{\vert F \vert}}\sum_{f\in F} \sum_{\substack{\mathbf{s} \in \lhd_0\binom{[p]}{k},\\ k\in \{0,1,\cdots,p\}}} \left(\prod_{j=1}^{k}(e^{-i\beta_{s_j}}-1)\right) \left(\prod_{j=1}^{k}\frac{\displaystyle\sum_{f'\in F}e^{-i \sum_{l=s_{j-1}+1}^{s_j}\gamma_l \mathcal{C}(f')}}{\vert F \vert}\right) e^{-i \sum_{j=s_k+1}^{p}\gamma_j \mathcal{C}(f)} \vert f \rangle 
    \end{align}
    Then, the probability of measuring a computational basis state $\vert f\rangle$, $f \in F$, is given as,
    \begin{equation}
        \begin{split}
          & \vert \langle f \vert \psi_{p,\mathcal{C},F}(\boldsymbol{\gamma},\boldsymbol{\beta})\rangle \vert^2 \\
        = & \frac{1}{\vert F \vert} \left\vert \sum_{\substack{\mathbf{s} \in \lhd_0\binom{[p]}{k},\\ k\in \{0,1,\cdots,p\}}} e^{-i \sum_{j=s_k+1}^{p}\gamma_j \mathcal{C}(f)} \prod_{j=1}^{k}(e^{-i\beta_{s_j}}-1) \prod_{j=1}^{k}\frac{\displaystyle\sum_{f'\in F}e^{-i \sum_{l=s_{j-1}+1}^{s_j}\gamma_l \mathcal{C}(f')}}{\vert F \vert} \right\vert^2.
        \end{split}
    \end{equation}

    Here, we assume a tunable phase function $\mathcal{C}$, and introduce a vector parameter $\mathbf{v} \in \mathbb{R}^{\vert F \vert}$, 
    where each component represents $\mathcal{C}(f')$, $f' \in F$. 
    Additionally, we define another parameter $z \in \mathbb{R}$, 
    independent of $\mathbf{v}$, to specifically replace $\mathcal{C}(f)$ outside the product of sums. 
    Then, we have a new function, 
    \begin{equation}
        G_p(\boldsymbol{\gamma},\boldsymbol{\beta},\mathbf{v},z) \coloneqq \frac{1}{\vert F \vert}\vert g_p(\boldsymbol{\gamma},\boldsymbol{\beta},\mathbf{v},z) \vert^2,
    \end{equation}
    where
    \begin{equation}
        g_p(\boldsymbol{\gamma},\boldsymbol{\beta},\mathbf{v},z) \coloneqq \sum_{\substack{\mathbf{s} \in \lhd_0\binom{[p]}{k},\\ k\in \{0,1,\cdots,p\}}} e^{-i \sum_{j=s_k+1}^{p}\gamma_j z} \prod_{j=1}^{k}(e^{-i\beta_{s_j}}-1)\prod_{j=1}^{k}\frac{\displaystyle\sum_{m=1}^{\vert F \vert}e^{-i \sum_{l=s_{j-1}+1}^{s_j}\gamma_l v_m}}{\vert F \vert} .
    \end{equation}
    We can find the upper bound of $\vert \langle f \vert \psi_{p,\mathcal{C},F}(\boldsymbol{\gamma},\boldsymbol{\beta})\rangle \vert^2$ by maximizing $G_p$, as naturally,
    \begin{equation}
        \max_{\boldsymbol{\gamma}\in \mathbb{R}^p ,\boldsymbol{\beta}\in \mathbb{R}^p,\mathcal{C}}\vert \langle f \vert \psi_{p,\mathcal{C},F}(\boldsymbol{\gamma},\boldsymbol{\beta})\rangle \vert^2 \leqslant \max_{\boldsymbol{\gamma}\in \mathbb{R}^p ,\boldsymbol{\beta}\in \mathbb{R}^p, \mathbf{v} \in \mathbb{R}^{\vert F \vert}, z \in \mathbb{R}}G_p(\boldsymbol{\gamma},\boldsymbol{\beta},\mathbf{v},z).
    \end{equation}

    The partial derivative of $G_p(\boldsymbol{\gamma},\boldsymbol{\beta},\mathbf{v},z)$ with respect to $v_m, m \in \{1,2,\cdots,\vert F \vert\}$, is given by,
    \begin{equation}
        \frac{\partial G_p(\boldsymbol{\gamma},\boldsymbol{\beta},\mathbf{v},z)}{\partial v_m} = \frac{2}{\vert F \vert} \operatorname{Re}(\frac{\partial g_p(\boldsymbol{\gamma},\boldsymbol{\beta},\mathbf{v},z)}{\partial v_m}\overline{g_p}(\boldsymbol{\gamma},\boldsymbol{\beta},\mathbf{v},z)),
    \end{equation}
    where $\operatorname{Re}(\cdot)$ returns the real part of the given number, and,
    \begin{align}
              & \frac{\partial g_p(\boldsymbol{\gamma},\boldsymbol{\beta},\mathbf{v},z)}{\partial v_m} \overline{g_p}(\boldsymbol{\gamma},\boldsymbol{\beta},\mathbf{v},z) \nonumber\\
            = & \sum_{\substack{\mathbf{s} \in \lhd_0\binom{[p]}{k},\\ k\in \{1,2,\cdots,p\}}} \left[ \begin{aligned}
                    & e^{-i \sum_{j=s_k+1}^{p}\gamma_j z} \left(\prod_{j=1}^{k}(e^{-i\beta_{s_j}}-1)\right) \times\\
                    & \sum_{j=1}^{k} \frac{-i\left(\sum_{l=s_{j-1}+1}^{s_j}\gamma_l\right)e^{-i\sum_{l=s_{j-1}+1}^{s_j}\gamma_l v_m}}{\vert F \vert} \prod_{\substack{j'=1,\\j'\neq j}}^{k} \frac{\displaystyle\sum_{m'=1}^{\vert F \vert} e^{-i\sum_{l=s_{j'-1}+1}^{s_{j'}}\gamma_l v_{m'}}}{\vert F \vert}
                \end{aligned} \right] \times \nonumber \\
              & \sum_{\substack{\mathbf{s} \in \lhd_0\binom{[p]}{k},\\ k\in \{0,1,\cdots,p\}}} e^{i \sum_{j=s_k+1}^{p}\gamma_j z} \prod_{j=1}^{k}(e^{i\beta_{s_j}}-1) \prod_{j=1}^{k}\frac{\displaystyle\sum_{m'=1}^{\vert F \vert}e^{i \sum_{l=s_{j-1}+1}^{s_j}\gamma_l v_{m'}}}{\vert F \vert} \nonumber\\
              = & -i\!\!\!\!\!\!\sum_{\substack{\mathbf{s} \in \lhd_0\binom{[p]}{k},\\\mathbf{s'} \in \lhd_0\binom{[p]}{k'},\\ k\in \{1,2,\cdots,p\},\\ k'\in \{0,1,\cdots,p\}}} \left[ \begin{aligned} \\
                & \prod_{j=1}^{k}(e^{-i\beta_{s_j}}-1)\prod_{j=1}^{k'}(e^{i\beta_{s'_j}}-1) \left(\sum_{t \in [\vert F \vert]^{k'}} \frac{e^{i\left(\sum_{j=1}^{k'} \sum_{l=s_{j-1}+1}^{s_j}\gamma_l v_{t_j} + \sum_{j=s'_{k'+1}}^{p}\gamma_l z\right)}}{\vert F \vert^{k'}}\right)\\
                & \sum_{j=1}^{k}\sum_{t\in [\vert F \vert]^{j-1}}\sum_{t'\in [\vert F \vert]^{k-j}}\frac{(\sum_{l=s_{j-1}+1}^{s_j}\gamma_l)e^{-i \left( \begin{aligned} 
                    & \sum_{j'=1}^{j-1}\sum_{l=s_{j'-1}+1}^{s_{j'}}\gamma_l v_{t_{j'}} + \sum_{l=s_{j-1}+1}^{s_j}\gamma_l v_m +\\ 
                    & \sum_{j'=j+1}^{k}\sum_{l=s_{j'-1}+1}^{s_{j'}}\gamma_l v_{t'_{j'-j}} +\sum_{l=s_k+1}^{p}\gamma_l z 
                \end{aligned} \right) }}{\vert F \vert^{k}}
            \end{aligned} \right].
    \end{align}
    When $\boldsymbol{\beta} \in \{\boldsymbol{\beta}\mid \beta_i = 2k_i\pi,k_i \in \mathbb{Z}, \forall i \in [p]\}$, the partial derivative, $\frac{\partial G_p(\boldsymbol{\gamma},\boldsymbol{\beta},\mathbf{v},z)}{\partial v_m}=0$. 
    From Lemma~\ref{lemma:umcancel}, all phase separation and mixing operations cancel out, leaving the circuit in the initial state $\vert F \rangle$,
    where the measurement probability of each computational basis state that encoded a possible solution is $\frac{1}{\vert F \vert}$, which is not the maximum.
    Alternatively, when $(\boldsymbol{\gamma},\boldsymbol{\beta},\mathbf{v},z) \in T^{(1)}_{p}$, where
    \begin{align}
        T^{(1)}_{p} \coloneqq \left\{ \boldsymbol{\gamma},\boldsymbol{\beta},\mathbf{v},z \middle\vert \begin{aligned} 
            & (\beta_i \!=\! a_i\pi) \wedge (\gamma_i v_j\!+\!d_i\!=\! b_{i,j}\pi) \wedge (\gamma_i z \! + \!d_i\!=\! c_{i}\pi) \wedge \\
            & (a_i \in \mathbb{Z}, b_{i,j}\in \mathbb{Z}, c_{i}\in \mathbb{Z}, d_i \in \mathbb{R}),  \forall i \in [p], \forall j \in [\vert F \vert]
        \end{aligned} \right\},
    \end{align}
    $\frac{\partial g_p(\boldsymbol{\gamma},\boldsymbol{\beta},\mathbf{v},z)}{\partial v_m} \overline{g_p}(\boldsymbol{\gamma},\boldsymbol{\beta},\mathbf{v},z)$ retains only the imaginary part that $\frac{\partial G_p(\boldsymbol{\gamma},\boldsymbol{\beta},\mathbf{v},z)}{\partial v_m}=0, \forall m \in [\vert F \vert]$.
    The points where $G_p$ attains its maximum value are certainly among $T^{(1)}_{p}$ as,
    \begin{equation}
        \max_{\boldsymbol{\gamma}\in \mathbb{R}^p ,\boldsymbol{\beta}\in \mathbb{R}^p, \mathbf{v} \in \mathbb{R}^{\vert F \vert}, z \in \mathbb{R}} G_p(\boldsymbol{\gamma},\boldsymbol{\beta},\mathbf{v},z) = \max_{(\boldsymbol{\gamma},\boldsymbol{\beta},\mathbf{v},z) \in T^{(1)}_{p}} G_p(\boldsymbol{\gamma},\boldsymbol{\beta},\mathbf{v},z).
    \end{equation}

    Let $(\boldsymbol{\gamma},\boldsymbol{\beta},\mathbf{v},z) \in T^{(1)}_{p}$, then, we get
    \begin{align}
        \begin{aligned}
            G_p(\boldsymbol{\gamma},\boldsymbol{\beta},\mathbf{v},z) = & \frac{1}{\vert F \vert} \left\lvert \sum_{\substack{\mathbf{s} \in \lhd_0\binom{[p]}{k},\\ k\in \{0,1,\cdots,p\}}} e^{-i \sum_{j=s_k+1}^{p}(c_l \pi - d_l)} \prod_{j=1}^{k}(e^{-i a_{s_j} \pi}-1)\prod_{j=1}^{k}\frac{\displaystyle\sum_{m=1}^{\vert F \vert}e^{-i \sum_{l=s_{j-1}+1}^{s_j}(b_{l,m}\pi-d_l)}}{\vert F \vert}  \right\rvert^2 \\
            = & \frac{1}{\vert F \vert} \left\lvert \sum_{\substack{\mathbf{s} \in \lhd_0\binom{[p]}{k},\\ k\in \{0,1,\cdots,p\}}} e^{-i \sum_{j=s_k+1}^{p}c_l \pi} \prod_{j=1}^{k}(e^{-i a_{s_j} \pi}-1)\prod_{j=1}^{k}\frac{\displaystyle\sum_{m=1}^{\vert F \vert}e^{-i \sum_{l=s_{j-1}+1}^{s_j} b_{l,m}\pi}}{\vert F \vert}  \right\rvert^2,
        \end{aligned}\label{eqn:mapping}
    \end{align}
    where $a_i \in \mathbb{Z}, b_{i,j}\in \mathbb{Z}, c_{i}\in \mathbb{Z}, d_i \in \mathbb{R},  \forall i \in [p], \forall j \in [\vert F \vert]$.
    Following this, we can further narrow the search space for finding the maximum of the $G_p$ to the set, $T^{(2)}_p$,
    \begin{align}
        T^{(2)}_{p} \coloneqq \left\{ \boldsymbol{\gamma},\boldsymbol{\beta},\mathbf{v},z \middle\vert \boldsymbol{\gamma} \in \{0,\pi\}^p, \boldsymbol{\beta}\in \{0,\pi\}^p, \mathbf{v} \in \mathbb{Z}^{\vert F \vert} , z \in \mathbb{Z} \right\},
    \end{align}
    such that,
    \begin{equation}
        \max_{(\boldsymbol{\gamma},\boldsymbol{\beta},\mathbf{v},z) \in T^{(1)}_{p}} G_p(\boldsymbol{\gamma},\boldsymbol{\beta},\mathbf{v},z) = \max_{(\boldsymbol{\gamma},\boldsymbol{\beta},\mathbf{v},z) \in T^{(2)}_{p}} G_p(\boldsymbol{\gamma},\boldsymbol{\beta},\mathbf{v},z).
    \end{equation}

    We first consider the case that $\boldsymbol{\gamma} = \pi\mathbf{1}_p , \boldsymbol{\beta} =\pi\mathbf{1}_p$, where $\pi\mathbf{1}_p$ represents a length-$p$ vector with all elements $\pi$, we get,
    \begin{align}
        \begin{aligned}
            & G_p(\boldsymbol{\gamma}=\pi \mathbf{1}_p, \boldsymbol{\beta}=\pi \mathbf{1}_p ,\mathbf{v}\in \mathbb{Z}^{\vert F \vert},z \in \mathbb{Z}) \\
            = & \frac{1}{\vert F \vert} \left( \cos(p \pi z) + \sum_{k=1}^{p} \cos((p-k)\pi z) \sum_{j=0}^{k-1}\sum_{\mathbf{s} \in \lhd_0\binom{[k-1]}{j}\rhd_k}(-2)^{j+1}\prod_{i=1}^{j+1}\frac{\displaystyle\sum_{m=1}^{\vert F \vert}\cos((s_i-s_{i-1})\pi v_m)}{\vert F \vert}\right)^2.
        \end{aligned}
    \end{align}
    Assume there are $n$ odd numbers and $m$ even numbers in $\mathbf{v}$, and let $r \coloneqq \frac{-n+m}{\vert F \vert}$, then, we can define
    \begin{equation}
        H_k(r) = \sum_{j=0}^{k-1}\sum_{\mathbf{s} \in \lhd_0\binom{[k-1]}{j}\rhd_k}(-2)^{j+1}\prod_{i=1}^{j+1}h(s_j,s_{j-1},r),
    \end{equation}
    where
    \begin{align}
        h(s_j,s_{j-1},r) \coloneqq \frac{\displaystyle\sum_{m=1}^{\vert F \vert}\cos((s_j-s_{j-1})\pi v_m)}{\vert F \vert} = \begin{cases}
            1 & s_j-s_{j-1}\text{ is even},\\
            r & s_j-s_{j-1}\text{ is odd}.
        \end{cases}
    \end{align}
    Expand $H_{k+1}(r)$, we get,
    \begin{align}
            & H_{k+1}(r) = \sum_{j=0}^{k}\sum_{\mathbf{s} \in \lhd_0\binom{[k]}{j}\rhd_{k+1}}(-2)^{j+1} \prod_{i=1}^{j+1}h(s_j,s_{j-1},r) \nonumber\\
            = & \!\!\!\!\!\!\sum_{\substack{\mathbf{s} \in \lhd_0\binom{[k-1]}{j}\rhd_k\rhd_{k+1},\\j\in\{0,1,\dots,k-1\}}}\!\!\!\!\!\!\!\!\!\!\!\!\!\!(-2)^{j+2} \prod_{i=1}^{j+2}h(s_j,s_{j-1},r) + \!\!\!\!\!\!\!\!\!\!\!\!\!\! \sum_{\substack{\mathbf{s} \in \lhd_0\binom{[k-2]}{j}\rhd_{k-1}\rhd_{k+1},\\j\in\{0,1,\dots,k-2\}}}\!\!\!\!\!\!\!\!\!\!\!\!\!\!(-2)^{j+2} \prod_{i=1}^{j+2}h(s_j,s_{j-1},r) + \!\!\!\!\!\!\!\!\!\!\!\!\!\!\sum_{\substack{\mathbf{s} \in \lhd_0\binom{[k-2]}{j}\rhd_{k+1},\\ j \in \{0,1,\dots,k-2\}}}\!\!\!\!\!\!\!\!\!\!\!\!\!\!(-2)^{j+1} \prod_{i=1}^{j+1}h(s_j,s_{j-1},r) \nonumber\\
            = & -2h(k+1,k,r)\sum_{j=0}^{k-1}\sum_{\mathbf{s} \in \lhd_0\binom{[k-1]}{j}\rhd_k\rhd_{k+1}}(-2)^{j+1} \prod_{i=1}^{j+1}h(s_j,s_{j-1},r) \nonumber\\
              & -2h(k+1,k-1,r)\sum_{j=0}^{k-2}\sum_{\mathbf{s} \in \lhd_0\binom{[k-2]}{j}\rhd_{k-1}\rhd_{k+1}}(-2)^{j+1} \prod_{i=1}^{j+1}h(s_j,s_{j-1},r) \nonumber\\
              & + \sum_{j=0}^{k-2}\sum_{\mathbf{s} \in \lhd_0\binom{[k-2]}{j}\rhd_{k+1}}(-2)^{j+1} \prod_{i=1}^{j+1}h(s_j,s_{j-1},r) \nonumber\\
            = & -2rH_k(r)-2H_{k-1}(r)+H_{k-1}(r) = -2rH_k(r)-H_{k-1}(r). 
    \end{align}
    Consider $z\in \mathbb{Z}$ as a constant number, we define a function of $r$ as,
    \begin{align}
        \begin{aligned}
            & \mathcal{G}_{p,z}(r) = \frac{1}{\vert F \vert}\left(\cos(p\pi z)+\sum_{k=1}^{p}\cos((p-k)\pi z)H_k(r)\right)^2,\\
            & H_{k+1}(r)=-2rH_k(r)-H_{k-1}(r),\;\;H_1(r)=-2r,\;\;H_2(r)=4r^2-2.
        \end{aligned}
    \end{align}
    Then, we can find the maximal points of $G_p(\boldsymbol{\gamma}=\pi \mathbf{1}_p, \boldsymbol{\beta}=\pi \mathbf{1}_p ,\mathbf{v}\in \mathbb{Z}^{\vert F \vert},z \in \mathbb{Z})$, by solving the problem:
    \begin{align}
        \max_{r}\mathcal{G}_{p,z}(r), \quad s.t.\;r^2-1\leqslant 0.
    \end{align}
    Using the method of Lagrange multiplier, we have the Lagrange function as,
    \begin{align}
        \mathcal{L}(r,\nu)=-\mathcal{G}_{p,z}(r)+\nu(r^2-1),
    \end{align} 
    where $\nu$ is the Lagrange multiplier. Then, the problem is transformed as,
    \begin{align}
        \left\{
             \begin{aligned} 
                & \frac{\partial \mathcal{L}(r,\nu)}{\partial r} = -\frac{2}{\vert F\vert} \left( \cos(p\pi z) + \sum_{k=1}^{p} \cos((p-k)\pi z) H_{k}(r)\right) \left( \sum_{k=1}^{p} \cos((p-k)\pi z) \frac{\partial H_{k}(r)}{\partial r}\right) + 2\nu r = 0  \\
                & r^2-1 \leqslant 0, \\
                & \nu \geqslant 0, \\
                & \nu(r^2-1) = 0.
            \end{aligned}
        \right.
    \end{align}
    In the case of where $z$ is even, we find that if $r=-1$,
    \begin{align}
        \begin{aligned} 
            & H_k(-1)=2, \left.\frac{\partial H_k(r)}{\partial r}\right|_{r=-1}=-2k^2,\quad\forall k \in [p], \\
            & \mathcal{G}_{p,z}(-1)=\frac{(2p+1)^2}{\vert F \vert}, \\
            & \nu = \frac{p(p+1)(2p+1)^2}{3\vert F \vert}>0.
        \end{aligned}
    \end{align}
    In the case of where $z$ is odd, we find that if $r=1$,
    \begin{align}
        \begin{aligned} 
            & H_k(1)=2(-1)^k, \left.\frac{\partial H_k(r)}{\partial r}\right|_{r=-1}=2(-1)^k k^2,\quad\forall k \in [p], \\
            & \mathcal{G}_{p,z}(1)=\frac{(2p+1)^2}{\vert F \vert}, \\
            & \nu = \frac{p(p+1)(2p+1)}{\vert F \vert}>0.
        \end{aligned}
    \end{align}
    Hence, when $z$ is even, $r=-1$, or when $z$ is odd, $r=1$, $\mathcal{G}_{p,z}(r)$ attains its maximum $\frac{(2p+1)^2}{\vert F \vert}$. 
    Following this, we get,
    \begin{align}
        \begin{aligned}
            & \max_{\mathbf{v},z} G_p(\boldsymbol{\gamma}=\pi \mathbf{1}_p, \boldsymbol{\beta}=\pi \mathbf{1}_p ,\mathbf{v}\in \mathbb{Z}^{\vert F \vert},z \in \mathbb{Z}) = \frac{(2p+1)^2}{\vert F \vert},\\
            & \argmax_{\mathbf{v},z} G_p(\boldsymbol{\gamma}=\pi \mathbf{1}_p, \boldsymbol{\beta}=\pi \mathbf{1}_p ,\mathbf{v}\in \mathbb{Z}^{\vert F \vert},z \in \mathbb{Z}) \in T^{(3)}_p,
        \end{aligned}
    \end{align}
    where
    \begin{align}
        T^{(3)}_p \coloneqq \left\{\mathbf{v},z \middle\vert \begin{aligned}
         & (v_i\!=\!2a_i\!+\!1 \wedge z\!=\!2b, a_i\in \mathbb{Z},b\in \mathbb{Z}, \forall i \in [\vert F\vert]) \vee \\
         & (v_i\!=\!2a_i \wedge z\!=\!2b\!+\!1, a_i\in \mathbb{Z},b\in \mathbb{Z},\forall i \in [\vert F\vert]) 
        \end{aligned}\right\}.
    \end{align}

    For the case that the parameters $(\boldsymbol{\gamma},\boldsymbol{\beta}) \in S_m, S_m \coloneqq \bigcup_{\mathbf{s}\in \binom{[p]}{m}}\{\boldsymbol{\gamma}, \boldsymbol{\beta} \mid \beta_i = 0 \vee \gamma_i = 0, \forall i \in \mathbf{s}; \beta_j = \pi \wedge \gamma_j = \pi, \forall j \notin \mathbf{s}\}$, 
    we consider back to the Lemma~\ref{lemma:umcancel}, and Lemma~\ref{lemma:upcancel}, the depth reduces to $p-m$. 
    Then, 
    \begin{align}
        \max_{\mathbf{v},z} G_{p\!-\!m}(\boldsymbol{\gamma}=\pi \mathbf{1}_{p\!-\!m}, \boldsymbol{\beta}=\pi \mathbf{1}_{p\!-\!m} ,\mathbf{v}\in \mathbb{Z}^{\vert F \vert},z \in \mathbb{Z}) = \frac{(2(p-m)+1)^2}{\vert F \vert}<\frac{(2p+1)^2}{\vert F \vert},\; \forall m \in [p],
    \end{align}
    thus,
    \begin{align}
        \max_{\boldsymbol{\gamma}\in \mathbb{R}^p ,\boldsymbol{\beta}\in \mathbb{R}^p, \mathbf{v} \in \mathbb{R}^{\vert F \vert}, z \in \mathbb{R}}G_p(\boldsymbol{\gamma},\boldsymbol{\beta},\mathbf{v},z) = \max_{(\boldsymbol{\gamma},\boldsymbol{\beta},\mathbf{v},z) \in T^{(2)}_{p}} G_p(\boldsymbol{\gamma},\boldsymbol{\beta},\mathbf{v},z) = \frac{(2p+1)^2}{\vert F \vert}.
    \end{align}
    According to Equation~(\ref{eqn:mapping}), we get,
    \begin{align}
        \argmax_{\boldsymbol{\gamma},\boldsymbol{\beta},\mathbf{v},z} G_p(\boldsymbol{\gamma}\in \mathbb{R}^p ,\boldsymbol{\beta}\in \mathbb{R}^p, \mathbf{v} \in \mathbb{R}^{\vert F \vert}, z \in \mathbb{R}) \in T_p,
    \end{align}
    where
    \begin{align}
        T_{p} \coloneqq \left\{ \boldsymbol{\gamma},\boldsymbol{\beta},\mathbf{v},z \middle\vert \begin{aligned} 
            & \left(\beta_i \!=\! (2a_i\!+\!1)\pi, a_i \in \mathbb{Z},\forall i \in [p] \right) \wedge \\
            & \left(\begin{aligned}
                & (\gamma_i v_j \!+\! d_i \!=\! 2 b_{i,j}\pi \wedge \gamma_i z \!+\! d_i \!=\! (2 c_i \!+\! 1)\pi, b_{i,j},c_i,d_i \in \mathbb{Z}, \forall i \in [p], \forall j \in [\vert F \vert]) \vee \\
                & (\gamma_i v_j \!+\! d_i \!=\! (2 b_{i,j}\!+\!1)\pi \wedge \gamma_i z \!+\! d_i \!=\! 2 c_i\pi, b_{i,j},c_i,d_i \in \mathbb{Z}, \forall i \in [p], \forall j \in [\vert F \vert]) 
            \end{aligned}\right)
        \end{aligned} \right\}.
        \label{eqn:maxset}
    \end{align}

    Finally, according to Equation~\ref{eqn:maxset}, when $G_p$ attains its maximum, $z \neq v_i, \forall i \in [\vert F \vert]$, we arrive at the conclusion that the probability of sampling a computational basis state $\vert f \rangle$ from a depth-$p$ GM-QAOA circuit satisfies as,
    \begin{align}
        \vert \langle f \vert \psi_{p,\mathcal{C},F}(\boldsymbol{\gamma},\boldsymbol{\beta})\rangle \vert^2 < \max_{\boldsymbol{\gamma}\in \mathbb{R}^p ,\boldsymbol{\beta}\in \mathbb{R}^p, \mathbf{v} \in \mathbb{R}^{\vert F \vert}, z \in \mathbb{R}}G_p(\boldsymbol{\gamma},\boldsymbol{\beta},\mathbf{v},z) = \frac{(2p+1)^2}{\vert F \vert}.
    \end{align}
\end{proof}

\section{Proof of Theorem 3}
\label{sec:profThe3}
\begin{proof}[Proof of Theorem~\ref{the:3}]
    From Definition~\ref{def:approx} and Definition~\ref{def:mmr}, we get,
    \begin{equation}
        \begin{split}
            & \mu_{\frac{1}{(2p+1)^2}} - \alpha \\
            = & \frac{(2p+1)^2\sum_{i=1}^{\left\lceil \frac{\vert F \vert}{(2p+1)^2} \right\rceil } C(f^{(i)})}{\vert F \vert C(f^{(1)})} - \frac{\sum_{i=1}^{\vert F \vert} \vert \langle f^{(i)} \vert \psi_{p,\mathcal{C},F}(\boldsymbol{\gamma},\boldsymbol{\beta})\rangle \vert^2 C(f^{(i)})}{C(f^{(1)})} \\
            = & \frac{\sum_{i=1}^{\left\lceil \frac{\vert F \vert}{(2p+1)^2} \right\rceil } \left(\frac{(2p+1)^2}{\vert F \vert} - \vert \langle f^{(i)} \vert \psi_{p,\mathcal{C},F}(\boldsymbol{\gamma},\boldsymbol{\beta})\rangle \vert^2 \right)C(f^{(i)})}{C(f^{(1)})} - \frac{\sum_{i=\left\lceil \frac{\vert F \vert}{(2p+1)^2} \right\rceil +1}^{\vert F \vert} \vert \langle f^{(i)} \vert \psi_{p,\mathcal{C},F}(\boldsymbol{\gamma},\boldsymbol{\beta})\rangle \vert^2 C(f^{(i)})}{C(f^{(1)})} .
            \label{eqn:the3proof1}
        \end{split}
    \end{equation}
    From equation~\ref{eqn:sort},
    \begin{equation}
        \begin{split}
            (\ref{eqn:the3proof1}) \geqslant & \frac{\sum_{i=1}^{\left\lceil \frac{\vert F \vert}{(2p+1)^2} \right\rceil } \left(\frac{(2p+1)^2}{\vert F \vert} - \vert \langle f^{(i)} \vert \psi_{p,\mathcal{C},F}(\boldsymbol{\gamma},\boldsymbol{\beta})\rangle \vert^2 \right)C(f^{(i)})}{C(f^{(1)})} \\
            & - \frac{C(f^{(\left\lceil \frac{\vert F \vert}{(2p+1)^2} \right\rceil +1)}) \sum_{i=\left\lceil \frac{\vert F \vert}{(2p+1)^2} \right\rceil +1}^{\vert F \vert} \vert \langle f^{(i)} \vert \psi_{p,\mathcal{C},F}(\boldsymbol{\gamma},\boldsymbol{\beta})\rangle \vert^2 }{C(f^{(1)})} \\
            = & \frac{\sum_{i=1}^{\left\lceil \frac{\vert F \vert}{(2p+1)^2} \right\rceil } \left(\frac{(2p+1)^2}{\vert F \vert} - \vert \langle f^{(i)} \vert \psi_{p,\mathcal{C},F}(\boldsymbol{\gamma},\boldsymbol{\beta})\rangle \vert^2 \right)C(f^{(i)})}{C(f^{(1)})} \\ 
            & - \frac{C(f^{(\left\lceil \frac{\vert F \vert}{(2p+1)^2} \right\rceil +1)}) \left(1 - \sum_{i=1}^{\left\lceil \frac{\vert F \vert}{(2p+1)^2} \right\rceil} \vert \langle f^{(i)} \vert \psi_{p,\mathcal{C},F}(\boldsymbol{\gamma},\boldsymbol{\beta})\rangle \vert^2 \right)}{C(f^{(1)})}.
        \end{split}
    \end{equation}
    As $ \sum_{i=1}^{\left\lceil \frac{\vert F \vert}{(2p+1)^2} \right\rceil} \frac{(2p+1)^2}{\vert F \vert} \geqslant 1$,
    \begin{equation}
        \begin{split}
            (\ref{eqn:the3proof1}) \geqslant & \frac{\sum_{i=1}^{\left\lceil \frac{\vert F \vert}{(2p+1)^2} \right\rceil } \left(\frac{(2p+1)^2}{\vert F \vert} - \vert \langle f^{(i)} \vert \psi_{p,\mathcal{C},F}(\boldsymbol{\gamma},\boldsymbol{\beta})\rangle \vert^2 \right)C(f^{(i)})}{C(f^{(1)})} \\ 
            & - \frac{C(f^{(\left\lceil \frac{\vert F \vert}{(2p+1)^2} \right\rceil +1)}) \sum_{i=1}^{\left\lceil \frac{\vert F \vert}{(2p+1)^2} \right\rceil} \left( \frac{(2p+1)^2}{\vert F \vert} - \vert \langle f^{(i)} \vert \psi_{p,\mathcal{C},F}(\boldsymbol{\gamma},\boldsymbol{\beta})\rangle \vert^2 \right)}{C(f^{(1)})} \\
            = & \frac{\sum_{i=1}^{\left\lceil \frac{\vert F \vert}{(2p+1)^2} \right\rceil } \left(\frac{(2p+1)^2}{\vert F \vert} - \vert \langle f^{(i)} \vert \psi_{p,\mathcal{C},F}(\boldsymbol{\gamma},\boldsymbol{\beta})\rangle \vert^2 \right) \left(C(f^{(i)})-C(f^{(\left\lceil \frac{\vert F \vert}{(2p+1)^2} \right\rceil +1)})\right)}{C(f^{(1)})}
        \end{split}
    \end{equation}
    From Theorem~\ref{the:1} that $\frac{(2p+1)^2}{\vert F \vert} > \vert \langle f^{(i)} \vert \psi_{p,\mathcal{C},F}(\boldsymbol{\gamma},\boldsymbol{\beta})\rangle \vert^2$, we finally get,
    \begin{equation}
        (\ref{eqn:the3proof1}) \geqslant 0.
    \end{equation}
\end{proof}

\section{Problem Definition and Instance Sets}
\label{sec:problemset}
\subsection{Traveling Salesman Problem}
The traveling salesman problem seeks to find the shortest possible route that visits each city once and returns to the original city. Here, following \cite{lucas2014ising}, we fix the first city, and formulate an $n$-city traveling salesman problem instance $(C,F)$ as follows,
\begin{align}
    \begin{aligned}
        & \min_{\mathbf{x}\in F} C(\mathbf{x}); \\
        & C(\mathbf{x}) = \sum_{i=1}^{n-1}w_{0,i}x_{1,i}+\sum_{i=1}^{n-1}w_{i,0}x_{n\!-\!1,i}+\sum_{t=2}^{n-1}\sum_{\substack{i,j=1,\\i\neq j}}w_{i,j}x_{t\!-\!1,i}x_{t,j};\\
        & F =  \left\{\mathbf{x} \middle\vert \mathbf{x}\in \{0,1\}^{(n-1)^2};\; \forall t \in [n\!-\!1], \sum_{t=1}^{n-1}x_{t,i}=1;\; \forall i \in [n\!-\!1], \sum_{i=1}^{n-1}x_{t,i}=1 \right\},
    \end{aligned}
\end{align}
where $w_{i,j} \in \mathbb{R}$ represents the distance between city $i$ and city $j$.

Our instance sets include problems ranging from $7$-city to $13$-city, and $48$ instances for each size. The coordinates of the cities are sampled from a uniform distribution within interval $[0,1]$.

\subsection{Max-$k$-Colorable-Subgraph Problem}
The max-$k$-colorable-subgraph problem seeks to find a subgraph with the maximum number of edges that can be properly colored using at most $k$ colors, such that no two adjacent vertices share the same color. Here, we use the definition in \cite{wang2020x}, given an $n$-vertex undirected graph with the edge set $E$, the max-$k$-colorable-subgraph problem instance $(C,F)$ is formulated as follows,
\begin{align}
    \begin{aligned}
        & \max_{\mathbf{x}\in F} C(\mathbf{x}); \\
        & C(\mathbf{x}) = \vert E \vert - \sum_{v=1}^{k}\sum_{(i,j)\in E} x_{i,v}x_{j,v};\\
        & F =  \left\{\mathbf{x} \middle\vert \mathbf{x}\in \{0,1\}^{n\times k};\; \forall i \in [n], \sum_{v=1}^{k}x_{i,v}=1 \right\}.
    \end{aligned}\label{eqn:mkcs}
\end{align}

Our instance sets include problems ranging from graphs with $11$ to $17$ vertices, and the color number $k$ is set to $3$. We generate $48$ instances for each vertex number. The graphs are generated randomly, but we specifically select graphs where the solution is the given graph itself, which increases the difficulty of the problem.

\subsection{Max-Cut Problem}
The max-cut problem is equivalent to the max-$k$-colorable-subgraph problem with $k=2$. Since there are only two colors, we can encode the color using a single bit. Given an $n$-vertex undirected graph with the edge set $E$, the max-cut problem instance $(C,F)$ is formulated as follows,
\begin{align}
    \begin{aligned}
        & \max_{\mathbf{x}\in F} C(\mathbf{x}); \\
        & C(\mathbf{x}) = \sum_{(i,j)\in E}(x_i - x_j)^2;\\
        & F =  \{0,1\}^{n}.
    \end{aligned}
\end{align}
It is worth noting that whether using this formulation or the max-$k$-colorable-subgraph formulation (i.e., Equation~\ref{eqn:mkcs}) with $k=2$, the resulting objective value distribution remains the same.

Our instance sets include problems with $3$-regular graphs having $16$, $18$, $20$, $22$, $24$, $26$, and $28$ vertices. We randomly generate $48$ instances for each vertex number.

\subsection{Max-$k$-Vertex-Cover Problem}
The max-$k$-vertex-cover problem aims to find a subset of vertices with a size of $k$ in an undirected graph, such that the number of edges incident to the vertices in the subset is maximized. Here, we use the definition in \cite{Bartschi_2020}, given an $n$-vertex undirected graph with the edge set $E$, the max-$k$-vertex-cover problem instance $(C,F)$ is formulated as follows,
\begin{align}
    \begin{aligned}
        & \max_{\mathbf{x}\in F} C(\mathbf{x}); \\
        & C(\mathbf{x}) = \vert E \vert - \sum_{(i,j)\in E}(1-x_i)(1-x_j);\\
        & F =  \left\{\mathbf{x} \middle\vert \mathbf{x}\in \{0,1\}^{n};\; \sum_{i=1}^{n}x_{i}=k \right\}.
    \end{aligned}
\end{align}

Our instance sets include problems ranging from graphs with vertex number, $n \in \{18,20,\dots,30\}$, and $k$ is set to $\frac{n}{2}$. We generate $48$ instances for $n$. The graphs are generated using the Erdős–Rényi model with an edge probability of $0.5$.

\end{appendices}

\bibliography{ref}   
\bibliographystyle{plainnat}   

\end{document}